\documentclass[12pt]{article}
\usepackage[utf8]{inputenc}
\usepackage{comment}
\usepackage{amsmath,amssymb,amsthm,mathtools,amstext}
\usepackage{graphicx}
\usepackage{subfig}
\usepackage{subcaption}
\usepackage{tikz}
\usetikzlibrary{shapes.geometric, arrows}
\usepackage{pgfplots}
\pgfplotsset{compat=1.17}
\usepackage{pgf-pie}
\usepackage{svg}
\usepackage{float}
\newtheorem{proposition}{Proposition}[section]

\newtheorem{theorem}{Theorem}
\newtheorem{lemma}[proposition]{Lemma}

\theoremstyle{definition}

\pagestyle{headings}
\numberwithin{equation}{section}
\setlength{\textheight}{230mm}
\setlength{\topmargin}{-15mm} 
\setlength{\textwidth}{17cm} 
\setlength{\oddsidemargin}{-3mm}
\setlength{\columnseprule}{.1pt}
\setlength{\columnsep}{20pt}

\usepackage{color}
\usepackage{booktabs}
\DeclareMathAlphabet{\mathpzc}{OT1}{pzc}{m}{it}

\usepackage{hyperref}
\usepackage[style=authoryear,doi=false,url=false,eprint=false,maxcitenames=1,maxbibnames=1]{biblatex}
\addbibresource{lib.bib}
\usepackage{lineno}

\tikzstyle{block} = [rectangle, draw, text centered, minimum width=4cm, minimum height=1cm,   text width=5cm] 
\tikzstyle{arrow} = [thick, ->, >=stealth] 
\begin{document}

\title{Estimating the growth rate of a birth and death process using data from a small sample}
\author{Carola Sophia Heinzel\footnote{Department of Mathematical Stochastics, University of Freiburg.\\
Email:
carola.heinzel@stochastik.uni-freiburg.de}\: and Jason Schweinsberg\footnote{Department of Mathematics, University of California San Diego.  Email: jschweinsberg@ucsd.edu}}
\maketitle

\begin{abstract}
The problem of estimating the growth rate of a birth and death processes based on the coalescence times of a sample of $n$ individuals has been considered by several authors (\cite{stadler2009incomplete, williams2022life, mitchell2022clonal, Johnson2023}).  This problem has applications, for example, to cancer research, when one is interested in determining the growth rate of a clone.

Recently, \cite{Johnson2023} proposed an analytical method for estimating the growth rate using the theory of coalescent point processes, which has comparable accuracy to more computationally intensive methods when the sample size $n$ is large.  We use a similar approach to obtain an estimate of the growth rate that is not based on the assumption that $n$ is large.

We demonstrate, through simulations using the R package \texttt{cloneRate}, that our estimator of the growth rate performs well in comparison with previous approaches when $n$ is small.
\end{abstract}

\section{Introduction} 

Consider a birth and death process started from one individual at time zero.  Each individual dies independently at rate $\mu$, and each individual gives birth to another individual at rate $\lambda$.  If this population does not die out quickly, then in the long run it will grow exponentially at the rate $r := \lambda - \mu$. We assume that $r > 0$, which means the birth and death process is supercritical.  Suppose that, after a large time $T$, we sample $n$ individuals uniformly at random from the population.  Our objective is to estimate the growth rate $r$ from the genealogical tree of these $n$ sampled individuals.  We assume that mutations are neutral, so that the birth and death rates are the same for all individuals.

In applications, there are various methods available for estimating the genealogical tree from whole-genome single-cell DNA data at a single time point (e.g. \cite{de2013phylogenetic, kang2022sieve, kozlov2022cellphy, drummond2007beast, schliep2011phangorn, zafar2017sifit}). In the sequel, we assume that the genealogical tree, including the coalescence times when ancestral lines merge, can be recovered exactly.
Note that if the birth, death, and mutation rates are multiplied by $c$, while $T$ is multiplied by $1/c$, the expected pattern of mutations observed in the sample would be unchanged.  Therefore, reconstruction of the tree requires additional information that can be used to set the time axis.  This could come from knowledge of the mutation rate or, in the case of clinical data such as that analyzed by \cite{Johnson2023}, the patient's age.

The problem of estimating the growth rate is of particular interest in the study of cancer, where the birth and death process models a growing population of cancer cells and one wants to estimate the rate at which a clone is expanding. The growth rate is an important characteristic of cancer which can be interpreted as a measure of fitness (\cite{brown2023updating}).  Understanding the growth rate may help determine appropriate intervals for cancer screening (\cite{peer1993age}), and may inform treatment decisions and contribute to improved cancer prevention strategies (\cite{Johnson2023}). Beyond oncology, the concept of growth rate also plays an important role in epidemiology, where it reflects the speed at which a virus spreads through a population (\cite{stadler2011inferring, stadler2013uncovering}).

Earlier work on estimating the growth rate used computationally intensive methods. \cite{stadler2009incomplete} developed an approach based on Markov Chain Monte Carlo (MCMC) for estimating the growth rate of a birth and death process when each individual alive at time $T$ is sampled with some fixed probability.  \cite{williams2022life} and \cite{mitchell2022clonal} used an approach called Phylofit to estimate the growth rate.
Phylofit is based on computing the joint likelihood of the coalescence times, i.e. the times when the ancestral lineages of the sampled cells merge, for a model in which the population size levels off after an initial period of exponential growth. 

Recent theoretical developments have made it possible to estimate the growth rate by analytical methods.  \cite{Durrett2013} computed the site frequency spectrum for a supercritical birth and death process, that is, the expected number of mutations appearing on $k$ out of $n$ individuals sampled at a large time $T$. \cite{harris2020} and \cite{LAMBERT201830} developed a method to describe the exact genealogy of a sample of size $n$ from a birth and death process at time $T$.  Using these results, \cite{Johnson2023} proposed two closely related methods of obtaining point and interval estimates for the growth rate $r$ without using MCMC.  One method, which is entirely analytical, uses the lengths of the internal branches which can be calculated from the coalescence times in the coalescent tree  to estimate the growth rate, while the other method involves maximum likelihood estimation. 

\cite{Johnson2023} demonstrate that their methods estimate the growth rate well for large sample sizes, providing comparable accuracy to previous methods (\cite{stadler2009incomplete, williams2022life, mitchell2022clonal}) at much less computational cost. They apply their methods to blood cancer data, estimating the growth rate in 42 different clones and obtaining similar results to those obtained with more computationally intensive methods.

Nevertheless, because the methods proposed by \cite{Johnson2023} are derived using asymptotics for large $n$, they do not perform as well as the more computationally intensive methods in small samples.  Also, the estimator based on internal branch lengths depends not only on the coalescence times but also on the shape of the genealogical tree.  This adds variability to the estimate because the tree shape contains no information about the growth rate $r$.  Indeed, regardless of the growth rate, when one traces back the genealogy of $n$ sampled lineages in a birth and death process, at any time each remaining pair of lineages is equally likely to be the next pair that merges.

In this paper, we propose a modification of the estimators in \cite{Johnson2023} which is not built on the assumption that $n$ is large.  We avoid using information about the tree shape, which reduces the variance of the estimator.
We compare the new approach to previous approaches using both simulated data and real data.  We also show analytically that our estimator has, even for large $n$, less variability than the estimator based on internal branch lengths proposed by \cite{Johnson2023}.

\section{Methods for estimating the growth rate}

\subsection{Internal and external branch lengths}

Consider the genealogical tree of a sample of $n$ individuals from a birth and death process at time $T$.  Let $L_{k,n}$ denote the sum of the lengths of all branches that are ancestral to $k$ of the $n$ sampled individuals.  \cite{Durrett2013} showed that as $T \rightarrow \infty$, we have $$\mathbb E[L_{1,n}] \rightarrow \infty, \qquad \mathbb E[L_{k,n}] \rightarrow \frac{n}{rk(k-1)} \quad\mbox{for }k = 2, 3, \dots, n-1.$$
Branches that are ancestral to two or more sampled individuals are said to be internal, while branches that are ancestral to only one of the sampled individuals are said to be external (see Figure \ref{branch} for a visualization).  Therefore, if we denote by $L^{in}$ the total length of the internal branches, then Durrett's results imply that as $T \rightarrow \infty$, we have $$\mathbb E[L^{in}] \rightarrow \frac{n}{r} \sum_{k=2}^{n-1} \frac{1}{k(k-1)} = \frac{n}{r} \bigg(1 - \frac{1}{n-1}\bigg).$$

Note that this expression is inversely proportional to $r$, which implies that long internal branch lengths indicate a low growth rate, whereas short internal branch lengths indicate a higher growth rate.  Based on this result, \cite{Johnson2023} proposed the estimator
\begin{equation}\label{ILestimator}
\hat r_{Lengths} := \frac{n}{L^{in}}.
\end{equation}

\cite{Johnson2023} also derived further theoretical properties of this estimator, which they used to obtain asymptotically valid confidence intervals for $r$. \cite{schweinsberg2025asymptotics} then used similar methods to obtain second-order asymptotics for the full site frequency spectrum. 

\begin{figure}[h!]
\centering
\begin{tikzpicture}[scale = 1.2]

\draw(0, 0) -- (6, 0);

\draw[orange, thick] (0, 0) -- (0, 3.7); 
\draw[purple, thick] (0, 3.2) -- (0, 3.7); 

\draw[thick] (0, 3.7) -- (0,4);
\draw[blue, thick] (0, 0.0) -- (0, 2.5);

\draw[green!70!black, thick] (0.5, 1.8) -- (0.5, 2.5);
\draw[dashed] (0.5, 2.5) -- (0, 2.5);
\node at (0.5, 2.5) [right] {$H_{1}$};
\draw[blue, thick] (0.5, 0.0) -- (0.5, 1.8);

\draw[dashed] (1.5, 1.8) -- (0.5, 1.8);
\draw[blue, thick] (1.5, 0.0) -- (1.5, 1.8);
\node at (1.5, 1.8) [right] {$H_{2}$};

\draw[green!70!black, thick] (2.5, 2.0) -- (2.5, 3.2);
\draw[dashed] (2.5, 3.2) -- (0, 3.2);
\node at (2.5, 3.2) [right] {$H_{3}$};
\draw[blue, thick] (2.5, 0.0) -- (2.5, 2.0);

\draw[dashed] (3.0, 2.0) -- (2.5, 2.0);
\node at (3.0, 2.0) [right] {$H_{4}$};
\draw[blue, thick] (3.0, 0.0) -- (3.0, 2.0);

\draw[green!70!black,thick] (3.8, 0) -- (3.8, 3.0);
\draw[yellow!85!orange,thick] (3.8, 3.0) -- (3.8, 3.7);

\draw[dashed] (3.8, 3.7) -- (0, 3.7);
\node at (3.8, 3.7) [right] {$H_{5}$};
\draw[blue, thick] (3.8, 0.0) -- (3.8, 2.8);

\draw[dashed] (4.5, 2.8) -- (3.8, 2.8);
\node at (4.5, 2.8) [right] {$H_{6}$};
\draw[blue, thick] (4.5, 0.0) -- (4.5, 2.8);

\draw[dashed] (5.2, 3.0) -- (3.8, 3.0);
\node at (5.2, 3.0) [right] {$H_{7}$};
\draw[green!70!black, thick] (5.2, 1.5) -- (5.2, 3.0);
\draw[blue, thick] (5.2, 0.0) -- (5.2, 1.5);

\draw[dashed] (5.8, 1.5) -- (5.2, 1.5);
\node at (5.8, 1.5) [right] {$H_{8}$};
\draw[blue, thick] (5.8, 0.0) -- (5.8, 1.5);

\node at (-0.25,0.05) {$T$};
\node at (-0.25, 4) {$0$};
\begin{scope}[shift={(7.2,2.7)}]
  \draw[thick, black] (0,1.5) -- (0.6,1.5) node[right, black] {$L_{9,9}$};
  \draw[thick, purple] (0,1.2) -- (0.6,1.2) node[right, black] {$L_{5,9}$};
  \draw[thick, yellow!75!orange] (0,0.9) -- (0.6,0.9) node[right, black] {$L_{4,9}$};
  \draw[thick, orange] (0,0.6) -- (0.6,0.6) node[right, black] {$L_{3,9}$};
  \draw[thick, green!70!black] (0,0.3) -- (0.6,0.3) node[right, black] {$L_{2,9}$};
  \draw[thick, blue] (0,0) -- (0.6,0) node[right, black] {$L_{1,9}$};
\end{scope}

\end{tikzpicture}

\caption{Genealogy of a sample of size $n = 9$ from a birth and death process. The green lines represent internal branches. The blue lines represent external branches.  The 8 coalescence times are indicated by $H_1, \dots, H_8$.}\label{branch}
\end{figure}
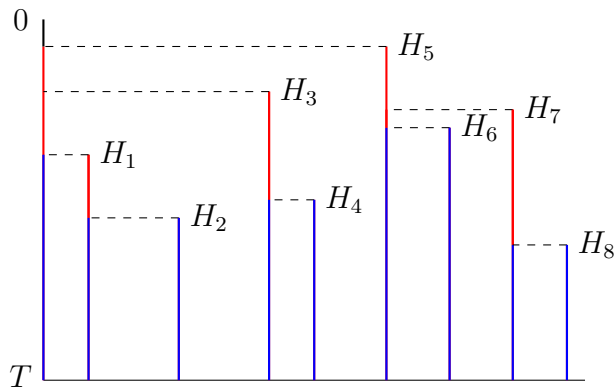

\subsection{Exact and approximate distributions of the coalescence times}

For a genealogical tree of a sample of size $n$ from a birth and death process, there are $n-1$ coalescence times.  \cite{harris2020} and \cite{LAMBERT201830} showed that there is a relatively simple way to represent these coalescence times for the genealogy of a sample of size $n$ from a supercritical birth and death process at time $T$.

To do this, we define
\begin{align*}
\delta_T := 
\frac{re^{-rT}}{\lambda (1-e^{-rT}) + re^{-rT}}.
\end{align*}
We can then generate the distribution of the coalescence times $(H_{i,n,T})_{i=1}^{n-1}$ in the following way:
\begin{itemize}
    \item[1.] Define a random variable $Y_{n,T}$ with density on $(0,1)$ given by $$f_{Y_{n,T}}(y) := \frac{n \delta_T y^{n-1}}{(y + \delta_T - y \delta_T)^{n+1}}.$$
   \item[2.] Conditional on $Y_{n,T} = y$, let the random variables $H_{1,n,T}, \dots, H_{n-1,n,T}$ be i.i.d. with density on $(0,T)$ given by
   \begin{equation}\label{HinT}
   f_{H_{i,n,T}|Y_{n,T}=y}(t) := \frac{y \lambda + (r - y \lambda) e^{-rT}}{y \lambda (1 - e^{-rT})} \cdot \frac{y \lambda r^2 e^{-rT}}{(y \lambda + (r - y \lambda)e^{-rt})^2}.
   \end{equation}
\end{itemize}
Given these coalescence times, one can generate the genealogical tree in two equivalent ways.  One method, which is implicit in the work of \cite{harris2020}, is to work backwards starting with $n$ lineages at time $T$ and merge two lineages, chosen uniformly at random, at each coalescence time.  Another method, proposed by \cite{LAMBERT201830}, is to use the coalescent point process, an approach which originated in \cite{popovic2004} and was developed further by \cite{Aldous_Popovic_2005}, \cite{stadler2009incomplete}, and \cite{ lambert2013birth}.  We draw a vertical line of height $T$, followed by vertical lines of heights $H_{1,n,T}, \dots, H_{n-1,n,T}$.  At the top of each vertical line, we draw a horizontal dashed line to the left, stopping when we hit another vertical line.  This procedure for generating a coalescent tree is illustrated in Figure~\ref{branch}.

\cite{Johnson2023} derived two successive approximations to these coalescence times, first taking a limit as $T \rightarrow \infty$ while $n$ is fixed, and then taking a limit as $n \rightarrow \infty$.  The first of these limits yields the following way to approximate the coalescence times.
\begin{itemize}
    \item[1.] Choose $Q_{n}$ from the density
    $$
    f_{Q_{n}}(q) := 
    \frac{nq^{n-1}}{(1+q)^{n+1}}, \quad q \in (0,\infty).$$
   \item[2.] Choose $(U_{i,n})_{i=1,.., n-1}$ independently from the conditional density
   \begin{equation}\label{Undensity}
   f_{U_{i,n}|Q_{n}=q}(u):=  
   \frac{1+q}{q} \cdot \frac{e^u}{(1+e^u)^2}, \quad u \in (-\log q, \infty).
   \end{equation}
   \item[3.] Let $H_{i,n} := T - \frac{1}{r}(\log Q_{n} + U_{i,n}).$
\end{itemize}
After taking the limit as $T \rightarrow \infty$, we can take the limit as $n \rightarrow \infty$.  Then the distribution of the coalescent times converges to a distribution that can be represented in the following way:
\begin{itemize}
    \item[1.] Let $W$ have an exponential distribution with mean $1$.
   \item[2.] Choose $(U_{i})_{i=1,.., n-1}$ independently from the logistic density
   \begin{equation}\label{Udensity}
   f_{U_i}(u):= \frac{e^u}{(1+e^u)^2}, \quad u \in (-\infty, \infty).
   \end{equation}
   \item[3.] Let \begin{align}
       H_i := T - \frac{1}{r}(\log(1/W) + \log n + U_i).
       \label{eq:H_i}
   \end{align}
\end{itemize}

While the work of \cite{Johnson2023} focused on the second of these limits, we will consider primarily the first of these limits, in which $n$ is fixed and $T \rightarrow \infty$, i.e. $(H_{i,n})_{i=1,..., n-1}$.  This approach has the benefit that the resulting estimates work well for small sample sizes, while the distribution of $H_{i,n}$ remains mathematically tractable.  

Based on \eqref{eq:H_i}, \cite{Johnson2023} also proposed a maximum likelihood estimator (MLE) to estimate $r.$  They wrote $H_i = a + b U_i$ with $b = 1/r$ and $a = T - \frac{1}{r}\left(\log(1/W) + \log(n) \right).$ Since $U_i$ is symmetric, this is equivalent to setting $H_i = a - b U_i.$ Then they used the R-package \texttt{maxLik} (\cite{henningsen2011maxlik}) to calculate the MLE for $b.$ The reciprocal of this estimator is denoted by $\hat r_{MLE}.$

\subsection{A modified estimator for the growth rate}

We will consider here the second of the two approximations discussed in the previous section.
Recall that the estimator $\hat r_{Lengths}$ in \eqref{ILestimator} was based on the internal branch length.  Referring to Figure \ref{branch}, the genealogical tree obtained using the coalescent point process has a branch on the left of height $T$, which we call the $0$th branch, and branches of heights $H_{1,n}, \dots, H_{n-1,n}$.  For $i = 1, \dots, n-2$, the length of the internal portion of the $i$th branch is $(H_{i,n} - H_{i+1,n})^+$, where $x^+ = \max\{x, 0\}$ for a real number $x$. The $(n-1)$st branch is entirely external, and the internal portion of the $0$th branch is $\max_{1 \leq i \leq n-1} H_{i,n} - H_{1,n}$.  Therefore, the total internal branch length is \begin{align}
    L_n^{in} = \bigg( \max_{1 \leq i \leq n-1} H_{i,n} - H_{1,n} \bigg) + \sum_{i=1}^{n-2} (H_{i,n} - H_{i+1,n})^+. \label{eq:IBL}
\end{align}

Note, however, that $L^{in}$ depends on the order in which the coalescence times $H_{1,n}, \dots, H_{n-1,n}$ are listed, even though, as can be seen from the construction in \cite{harris2020}, the order of the coalescence times provides no information about the growth rate $r$.  More formally, if we denote by $H_n^{(1)} \leq \dots \leq H_n^{(n-1)}$ the order statistics of the coalescence times, then the vector $(H_n^{(1)}, \dots, H_n^{(n-1)})$ is a sufficient statistic for estimating $r$, as all $(n-1)!$ orders of the coalescence times are equally likely regardless of the value of $r$.  Therefore, by the Rao-Blackwell Theorem, it should be possible to obtain an estimator with lower or equal variance by conditioning on the order statistics.

When we average over all possible orderings, $H_{1,n}$ in the first term gets replaced by the average of the $H_{i,n}$.  Each of the $n-2$ terms in the sum has a mean equal to the average of $(H_{i,n} - H_{j,n})^+$ over the $(n-1)(n-2)$ possible choices for $i$ and $j$.  Therefore,
    \begin{align}
        \mathbb E\left[L_n^{in}|H_n^{(1)}, \ldots, H_n^{(n-1)}\right] &= \bigg( H^{(n-1)}_n - \frac{1}{n-1} \sum_{i = 1}^{n-1} H_{i,n} \bigg)  + \frac{1}{n-1}\sum_{i = 1}^{n-1} \sum_{j = 1}^{n-1}(H_{i,n} - H_{j,n})^+ \label{eq:ev} \\
        &= \bigg( H^{(n-1)}_n - \frac{1}{n-1} \sum_{i = 1}^{n-1} H_{i,n} \bigg)  + \frac{1}{n-1} \sum_{i=1}^{n-1} \sum_{j=i+1}^{n-1} |H_{i,n} - H_{j,n}|. \notag
    \end{align}
A detailed derivation of \eqref{eq:ev} can be found in Appendix \ref{app:der}.
The second term on the right-hand side of \eqref{eq:ev} dominates when $n$ is large, which suggests that one can obtain a good estimator of $r$ that is based on the average of the terms $(H_{i,n} - H_{j,n})^+$.  Note that including the first term would double the weight assigned to the terms $(H_{i,n} - H_{j,n})^+$ in which $H_{i,n}$ is the largest of $H_{1,n}, \dots, H_{n-1,n}$, but we prefer to pursue the simpler approach of treating all of the terms $H_{i,n} - H_{j,n}$ symmetrically.  Therefore, we consider estimators of $r$ which are inversely proportional to the double sum in \eqref{eq:ev}.  
That is, we consider estimators of the form
\begin{equation}\label{rform}
    \hat r =  \frac{c(n) (n-1)(n-2)}{\sum_{i=1}^{n-1}\sum_{j=1}^{n-1}(H_{i,n} - H_{j,n})^+} ,
\end{equation}
where $c(n)$ is a constant depending only on $n$.  There are $(n-1)(n-2)$ terms in the denominator with $i \neq j$, and $\hat r$ equals $c(n)$ times the reciprocal of the average of these terms.  Note that when this estimator is applied to real data, the coalescence times derived from DNA sequence data will be used in place of $H_{i,n}$ and $H_{j,n}$.

\subsection{The choice of \texorpdfstring{$c(n)$}{c(n)}}

The best method for choosing $c(n)$ may depend on the nature of the application.  We propose here three methods:
\begin{enumerate}
\item Let $c_{MSE}(n)$ be the value of $c(n)$ that minimizes the mean squared error of $\hat r$ as an estimator of $r$.

\item Let $c_{Bias}(n)$ be the value of $c(n)$ that makes $\hat r$ an unbiased estimator of $r$.

\item Let $c_{Inv}(n)$ be the value of $c(n)$ that makes $1/\hat r$ an unbiased estimator of $1/r$.
\end{enumerate}
We denote the corresponding estimators by $\hat r_{MSE}$, $\hat r_{Bias}$, and $\hat r_{Inv}$.

Because $H_{i,n} = T - \frac{1}{r}(\log Q_n + U_{i,n})$, we have $H_{i,n} - H_{j,n} = \frac{1}{r}(U_{j,n} - U_{i,n})$, where the random variables $(U_{i,n})_{i = 1, \dots, n-1}$ are conditionally i.i.d. given $Q_n$ with density given in \eqref{Undensity}.  Therefore, $$\hat r = r c(n) S_n, \qquad \qquad S_n = \frac{(n-1)(n-2)}{\sum_{i=1}^{n-1} \sum_{j=1}^{n-1} (U_{i,n} - U_{j,n})^+}.$$
The mean squared error is given by $$\mathbb E[(\hat r - r)^2] = r^2 \Big( 1 - 2c(n) \mathbb E[S_n] + c(n)^2 \mathbb E[S_n^2] \Big),$$
and minimizing this expression gives $$c_{MSE}(n) = \frac{\mathbb E[S_n]}{\mathbb E[S_n^2]}.$$
Also, we have $$c_{Bias}(n) = \frac{1}{\mathbb E[S_n]}$$
and $$c_{Inv}(n) = \mathbb E \bigg[ \frac{1}{S_n} \bigg].$$
Because these constants are determined by the distribution of $S_n$, they depend only on $n$ and not on $r$.  Also, because it is easy to simulate a large number of random variables having the distribution of the $U_{i,n}$, we can determine the values of $c_{MSE}(n)$ and $c_{Bias}(n)$ by simulation.

It was shown in \cite{Johnson2023} that $\mathbb E[(U_i - U_j)^+] = 1$, where $U_i$ and $U_j$ are independent with density given by \eqref{Udensity}.  Therefore, we would expect $c_{MSE}(n)$, $c_{Bias}(n)$, and $c_{Inv}(n)$ all to tend to $1$ as $n \rightarrow \infty$.  The exact value of $c_{Inv}(n)$ can be calculated analytically.  We will show in the Appendix that
\begin{align}
    c_{Inv}(n) = \frac{n}{n-2} \bigg(1 - \frac{1}{n-1} \sum_{k=1}^{n-1} \frac{1}{k} \bigg). \label{eq:inv}
\end{align}

\subsection{Confidence Intervals}\label{sec:CI}

As in previous work (\cite{stadler2009incomplete}, \cite{williams2022life}, \cite{mitchell2022clonal}, \cite{Johnson2023}), we can obtain a confidence interval for the growth rate $r$ in addition to obtaining point estimates.  Through simulation, we can find the $0.025$ and $0.975$ quantiles of the distribution of $S_n$, denoted respectively by $q_{0.025}$ and $q_{0.975}$, such that $\mathbb P(q_{0.025} < S_n < q_{0.975}) = 0.95$.
Then, if we take $\hat r$ to be the estimator with $c(n) = 1$, so that $\hat r = r S_n$, we get
$$\mathbb P \bigg(\frac{\hat r}{q_{0.975}} < r < \frac{\hat r}{q_{0.025}} \bigg) = 0.95.$$
That is, we can take $\hat r/q_{0.975}$ and $\hat r/q_{0.025}$ to be the lower and upper endpoints respectively of a 95$\%$ confidence interval for $r$.

\section{Evaluation of the estimators}

\subsection{Simulation Results}

We conducted a simulation study to evaluate our estimators $\hat r_{MSE}$, $\hat r_{Inv}$, and $\hat r_{Bias}$, using simulation scenarios similar to those in \cite{Johnson2023}. Specifically, we simulated coalescent point processes for different values of $n$, $r$, and $T$,
repeating each scenario 1{,}000 times for each sample size and growth rate. The simulations were carried out using the R package \texttt{cloneRate}. Following \cite{Johnson2023}, we drew $\lambda$ from the uniform distribution on $[r, 1+r]$ and subsequently set $\mu = \lambda - r.$
We compared the performance of our estimators to the performance of the estimators $\hat{r}_{MCMC}$, $\hat{r}_{Lengths}$, and $\hat{r}_{MLE}$.  Here $\hat{r}_{MCMC}$ is the Markov chain Monte Carlo estimator based on the work of \cite{stadler2009incomplete}, which is implemented in \texttt{cloneRate}. Since this is a Bayesian method, its performance will depend on the prior distribution. \cite{williams2022life} proposed uniform priors, and we chose the default prior for $r$ of a uniform distribution on $[0,4]$.

\subsubsection{Densities of the estimators}

The densities of the six estimators are shown in Figure \ref{fig:d1} for
$r = 0.5$, $T = 40$, and $n = 5$
and in Figure \ref{fig:d4} for $r = 0.5$, $T = 40$, and $n = 20$.
These plots show that even for $n = 5$, the modes of the densities of $\hat r_{Lengths}$, $\hat r_{MLE}$, $\hat r_{MCMC}$, and $\hat r_{Inv}$ are similar to the true growth rate.  According to the density plots, $\hat r_{Bias}$ and especially $\hat r_{MSE}$ typically underestimate the growth rate for small $n$.  Moreover, the densities are all very skewed for $n = 5$. 

For $n = 20,$ the differences among the estimators are much smaller than for $n = 5.$  The skewness decreases with increasing sample size so that for $n = 20$, the densities look similar to the density of a normal distribution. This aligns with results in \cite{Johnson2023}.

To understand Figure \ref{fig:d1}, observe that if all of the coalescence times $H_{i,n}$ coincidentally happen to be very close to one another, the denominator in \eqref{rform} will be small, making the estimate $\hat r$ very large.  Although this is exceptionally rare when $n = 20$, and quite rare even when $n = 5$, it happens often enough when $n = 5$ to have a substantial influence on the mean squared error, and to a lesser extent the bias, of the estimate.  Therefore, a small value of $c(n)$, which mitigates the effect of these rare events, will minimize the mean squared error. Consequently, the estimator $\hat r_{MSE}$ that minimizes the mean squared error often underestimates $r$ by nearly a factor of $2$ when $n$ is small.

\begin{figure}[h!]
   \begin{minipage}{0.48\textwidth}
        \centering
        \includegraphics[width=\linewidth]{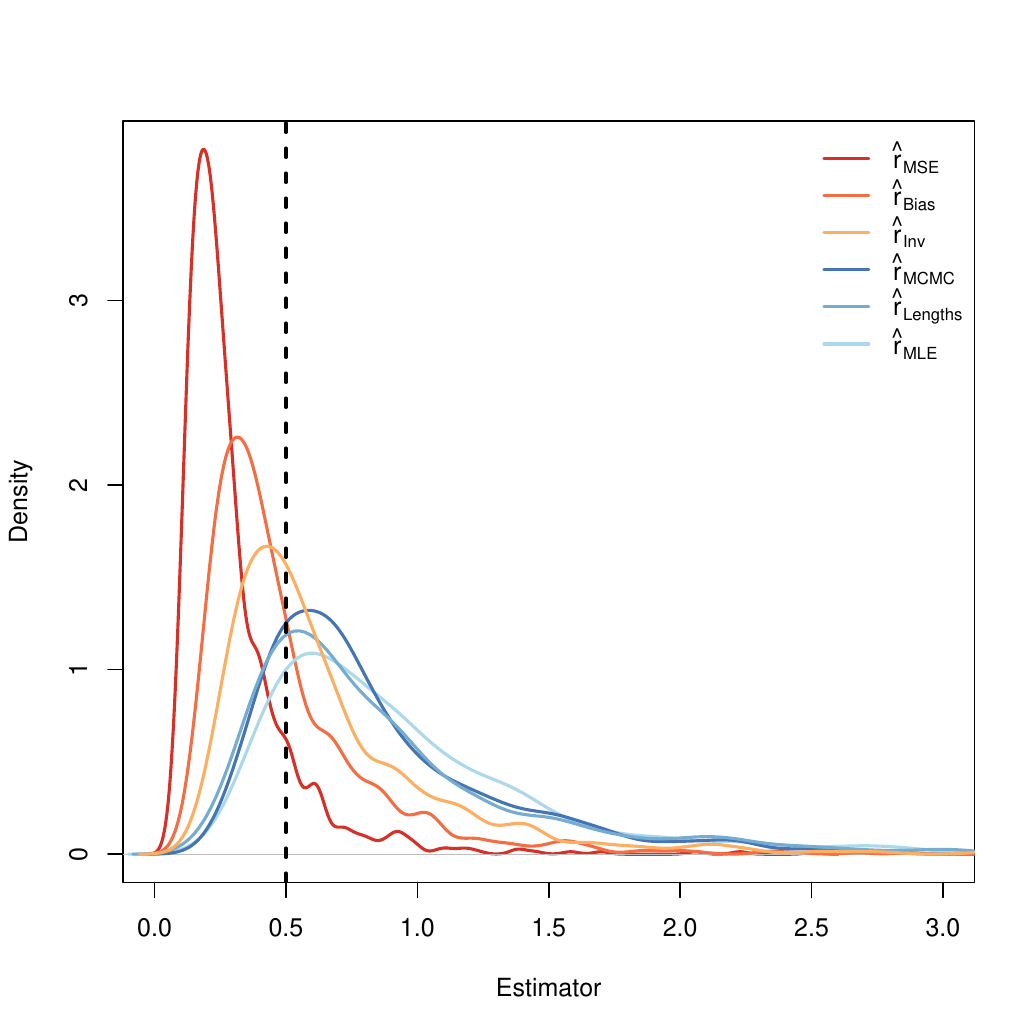}
        \caption{Densities of the six estimators when $r = 0.5$, $T = 40$, and $n = 5$. The black dotted line indicates the true growth rate.}
        \label{fig:d1}
    \end{minipage}
    \hfill  
    \begin{minipage}{0.48\textwidth}
        \centering
        \includegraphics[width=\linewidth]{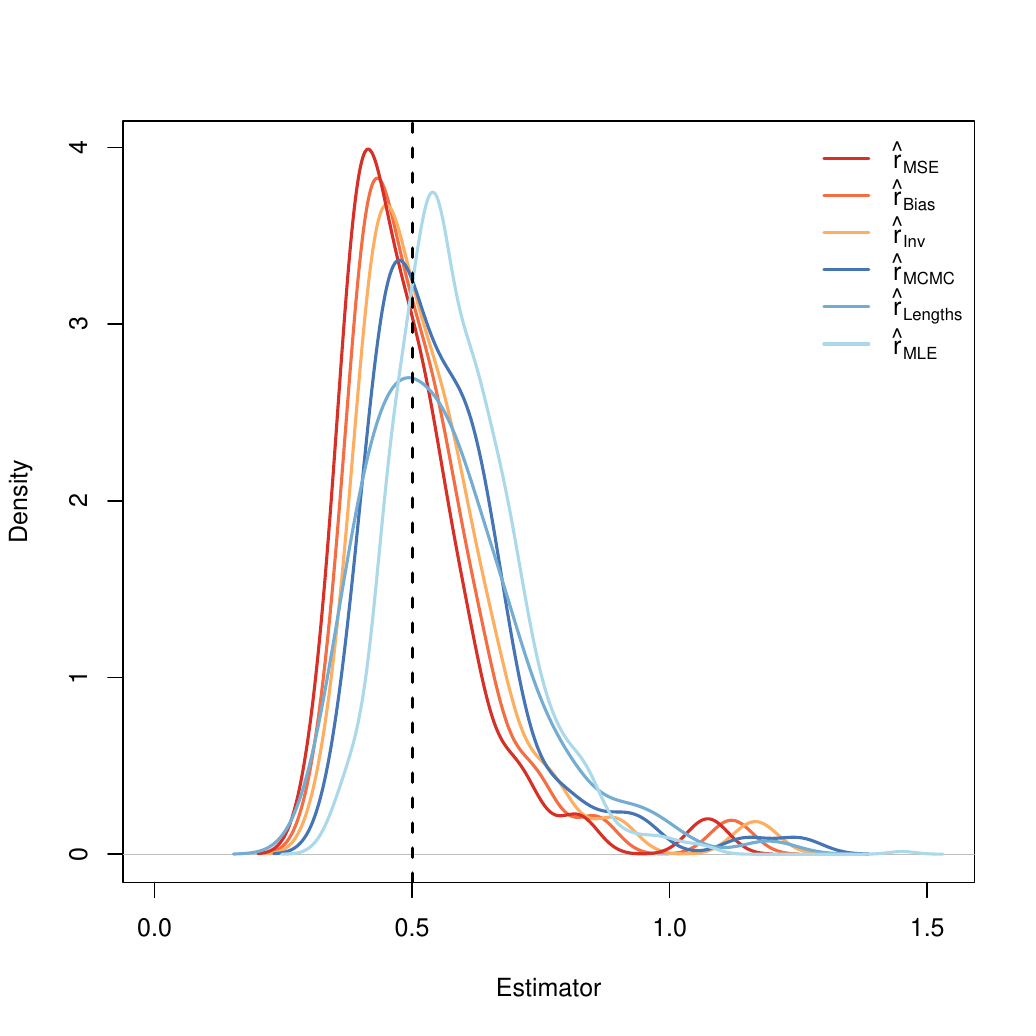}
    \caption{Densities of the six estimators when $r = 0.5$, $T = 40$, and $n = 20$. The black dotted line indicates the true growth rate.}
        \label{fig:d4}
    \end{minipage}
\end{figure}

\subsubsection{Mean Squared Error, Mean Absolute Error, and Bias}

We compared the root mean squared error (RMSE), the mean absolute error, and the bias for the six different estimators when $r = 0.5$ and $T = 40$, and when $r = 1$ and $T = 40$, for several different values of $n$.
The root mean squared errors are shown in Figures~\ref{fig:mse1} and~\ref{fig:mse2}. These figures show that the RMSE of the estimator \( \hat{r}_{{MSE}} \) is the lowest across all sample sizes. The differences among the estimators decrease as \( n \) increases.

Moreover, not only does \( \hat{r}_{{MSE}} \) outperform \( \hat{r}_{{MCMC}}, \hat{r}_{{MLE}} \) and \( \hat{r}_{{Lengths}} \) on this criterion, but also \( \hat{r}_{{Bias}} \) and \( \hat{r}_{{Inv}} \) achieve lower RMSEs than these previously existing methods when $n \geq 7$ in the case $r = 0.5$ and when $n \geq 9$ in the case $r = 1$.

\begin{figure}[h!]
   \begin{minipage}{0.44\textwidth}
        \centering
        \includegraphics[width=\linewidth]{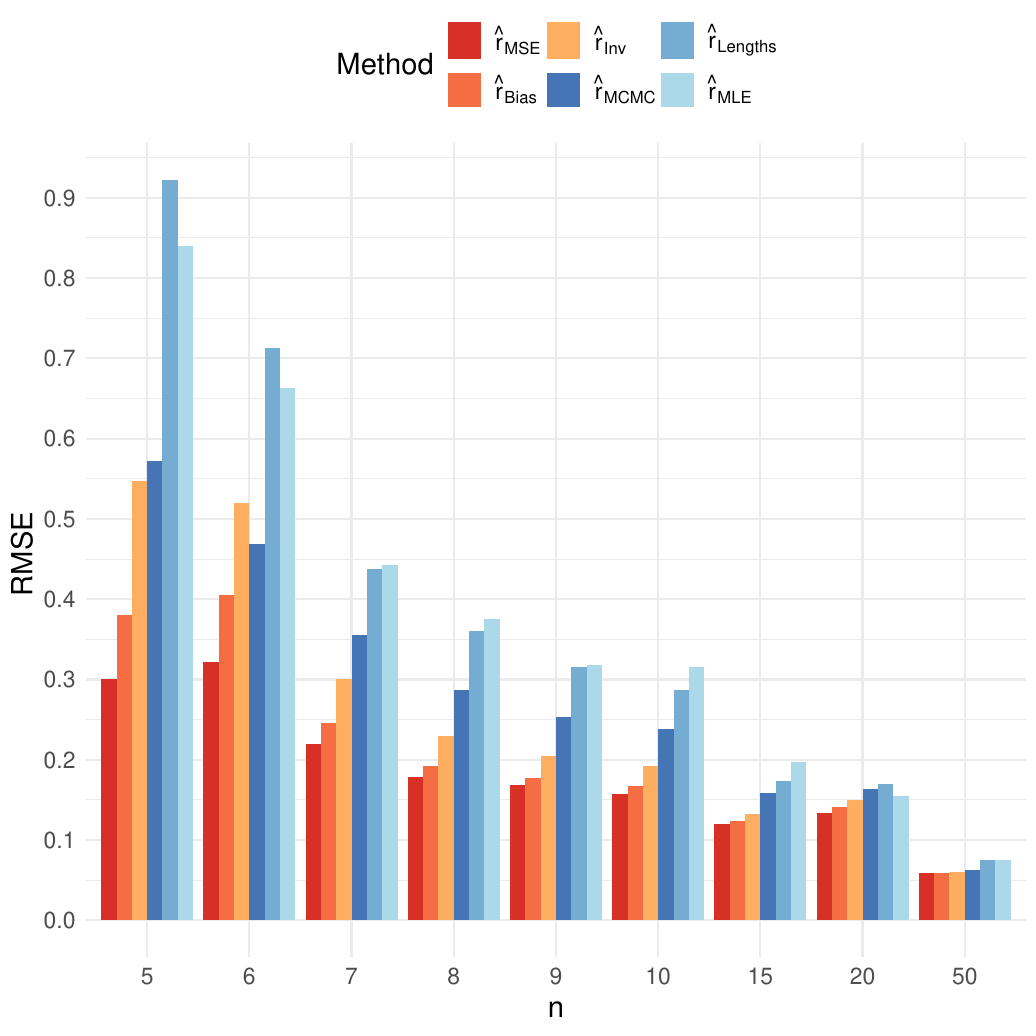}
        \caption{Comparison of the RMSE of the six estimators when $r = 0.5$ and $T = 40$.}
        \label{fig:mse1}
    \end{minipage}
    \hfill  
    \begin{minipage}{0.53\textwidth}
        \centering
        \includegraphics[width=\linewidth]{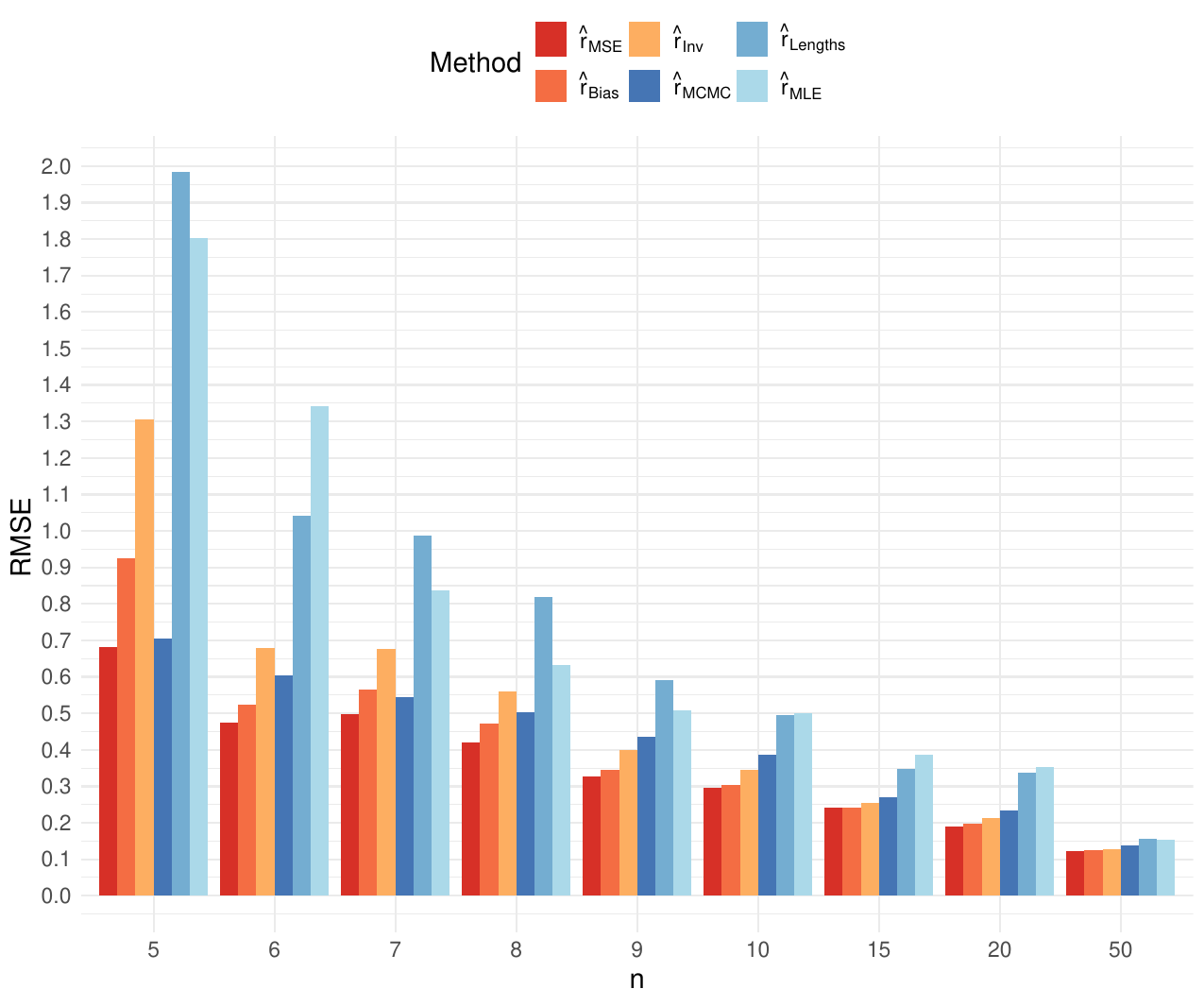}
        \caption{Comparison of the RMSE of the six estimators when $r = 1$ and $T = 40$.}
        \label{fig:mse2}
    \end{minipage}
\end{figure}

The mean absolute errors of the estimators are shown in Figures \ref{fig:MAD1} and \ref{fig:MAD2}. Here, $\hat r_{Bias}$ has the lowest mean absolute error across the six estimators. However, $\hat r_{MSE}$ and $\hat r_{Inv}$ also perform better than $\hat r_{MCMC}, \hat r_{MLE}$ and $\hat r_{Lengths}$ for most values of $n$, $r$, and $T$.

\begin{figure}[h!]
   \begin{minipage}{0.44\textwidth}
        \centering
        \includegraphics[width=\linewidth]{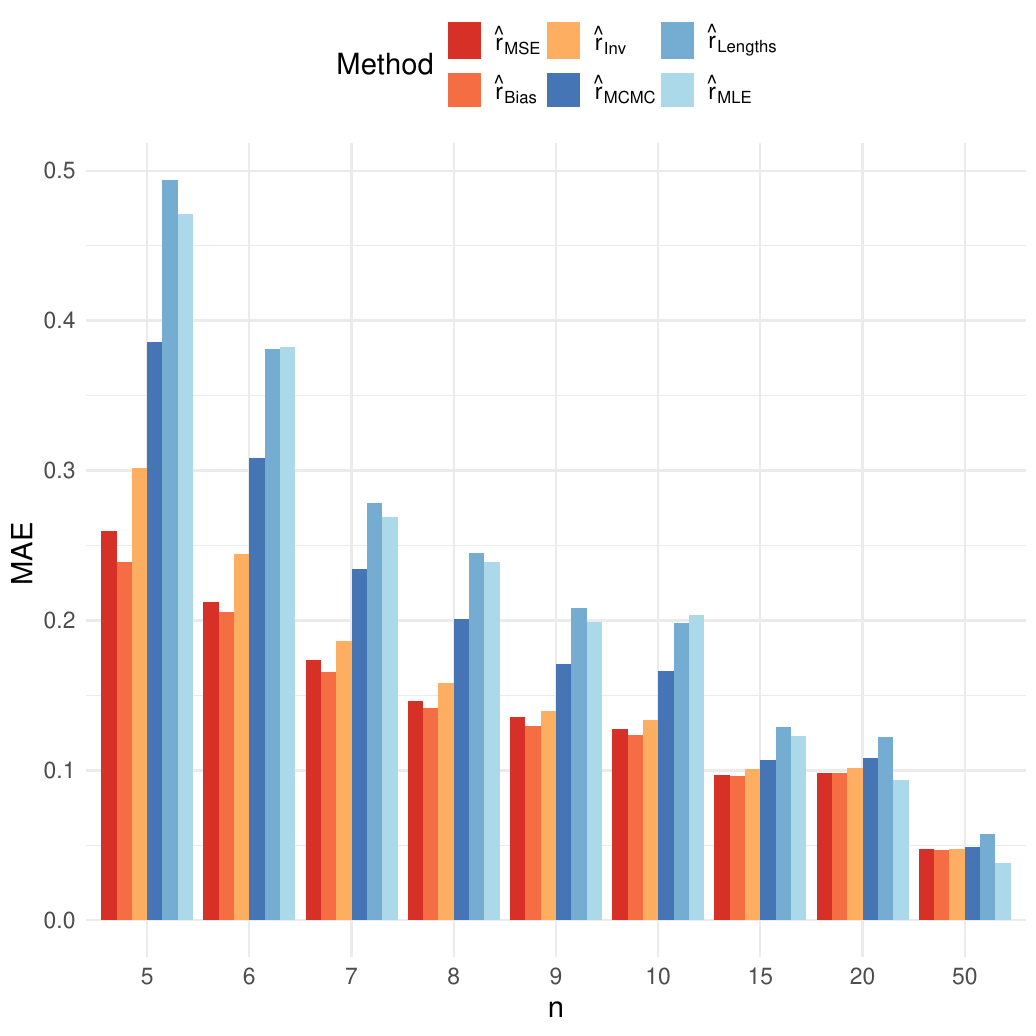}
        \caption{Comparison of the MAE of the six estimators when $r = 0.5$ and $T = 40$.}
        \label{fig:MAD1}
    \end{minipage}
    \hfill  
    \begin{minipage}{0.53\textwidth}
        \centering
        \includegraphics[width=\linewidth]{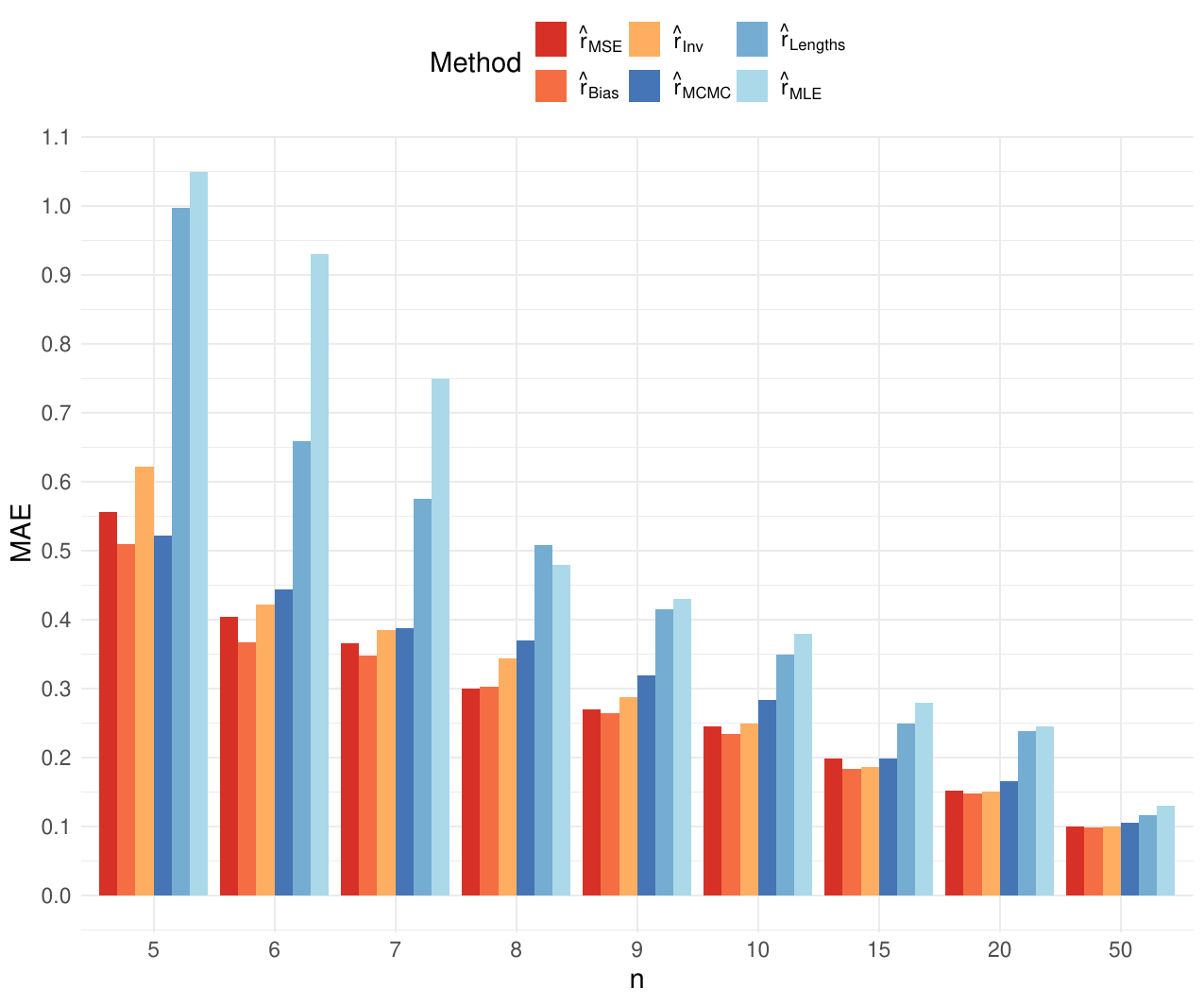}
        \caption{Comparison of the MAE of the six estimators when $r = 1$ and $T = 40$.}
        \label{fig:MAD2}
    \end{minipage}
\end{figure}

The bias of the estimators is presented in Figures \ref{fig:bias1} and \ref{fig:bias2}.  According to the figures, the estimates $\hat r_{Lengths}$, $\hat r_{MLE}$, and $\hat r_{MCMC}$ all tend to overestimate the growth rate on average for small $n$.  The estimate $\hat r_{Bias}$, which was designed to be exactly unbiased when $T = \infty$, remains nearly unbiased even when $T = 40$. Note that the estimate is nearly unbiased even though the mode of the density is considerably below the true value of $r$ because occasional large overestimates of $r$ affect the mean.
The estimate $\hat r_{Inv}$ has a slight upward bias, as would be expected from Jensen's inequality because it is the reciprocal of an estimator which is unbiased for $1/r$ when $T = \infty$.  The estimate $\hat r_{MSE}$ underestimates the growth rate on average when $n$ is small.  It acts as a shrinkage estimator, minimizing the MSE by introducing a downward bias to reduce the variance and mitigate the impact of rare events in which the growth rate is vastly overestimated.

\begin{figure}[h!]
   \begin{minipage}{0.44\textwidth}
        \centering
        \includegraphics[width=\linewidth]{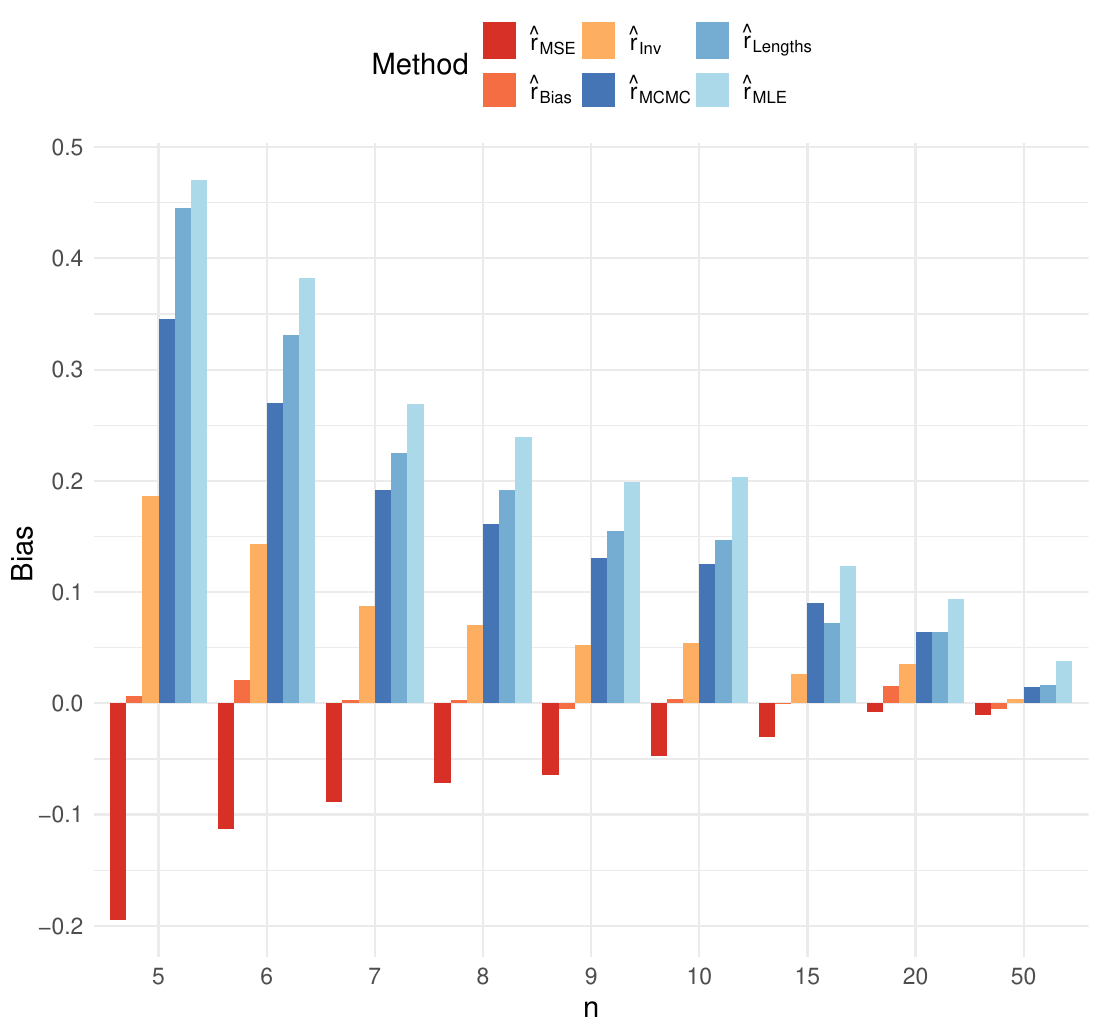}
        \caption{Comparison of the Bias of the six estimators when $r = 0.5$ and $T = 40$.}
\label{fig:bias1}
    \end{minipage}
    \hfill  
    \begin{minipage}{0.5\textwidth}
        \centering
        \includegraphics[width=\linewidth]{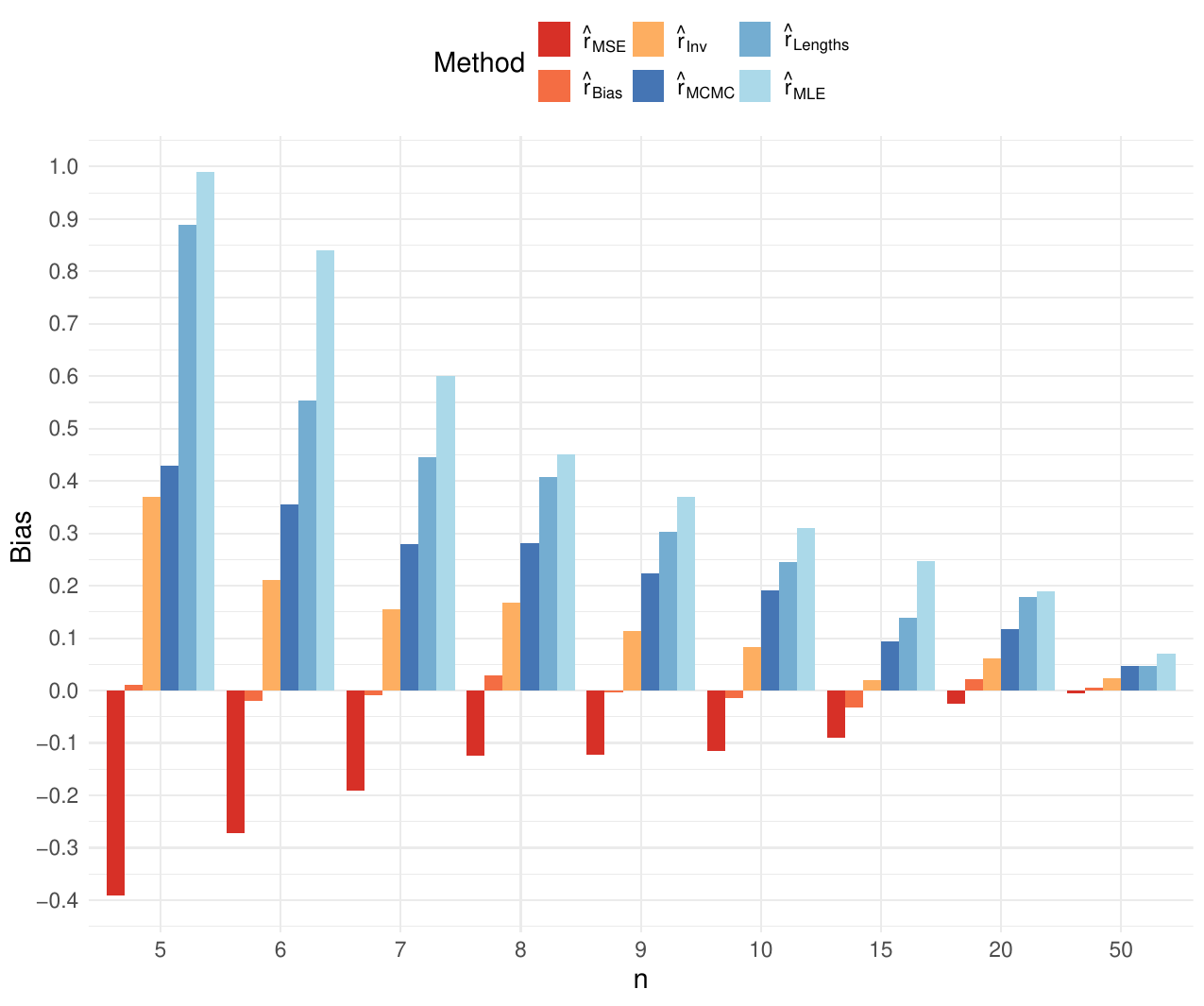}
        \caption{Comparison of the Bias of the six estimators when $r = 1$ and $T = 40$.}
        \label{fig:bias2}
    \end{minipage}
\end{figure}

We also compared the results when $n = 5$ and $T = 100$, and when $n = 20$ and $T = 100$, for several different values of $r$.  These results are shown in Figures~\ref{fig:T100_n5_mse}, \ref{fig:T100_n20_mse}, \ref{fig:T100_n_5_MAE}, and \ref{fig:T100_n_20_MAE}. These figures illustrate that the ratios $\mathrm{RMSE}/r$ and $\mathrm{MAE}/r$ for $\hat{r}_{{MSE}}$ and $\hat{r}_{{Inv}}$ remain roughly constant across the different values of $r$.  On the other hand, the ratios $\mathrm{RMSE}/r$ and $\mathrm{MAE}/r$
for $\hat{r}_{{MCMC}}$ tend to decrease as $r$ increases.  Consequently, by these criteria, the estimators $\hat{r}_{{MSE}}$ and $\hat{r}_{{Inv}}$ perform better for small $r$, while $\hat{r}_{{MCMC}}$ performs better for large $r$.

This behavior can be explained by the fact that the prior distribution imposes an upper bound of 4 on $\hat{r}_{{MCMC}}$, as this is the default value in the software package. As a result, the performance of $\hat{r}_{{MCMC}}$ is less impacted by rare events in which the other methods estimate extremely high values of $r$.
The impact of this is discussed in more detail in Section~\ref{sec:inf} in the appendix.  It is shown in Figure \ref{fig:mgr} in the appendix that the choice of the maximal growth rate substantially impacts the mean squared error of the estimator $\hat r_{MCMC}$. We believe that one of the benefits of our methods is that they do not require specifying a prior distribution in advance.

\begin{figure}[h!]
   \begin{minipage}{0.45\textwidth}
        \centering
        \includegraphics[width=\linewidth]{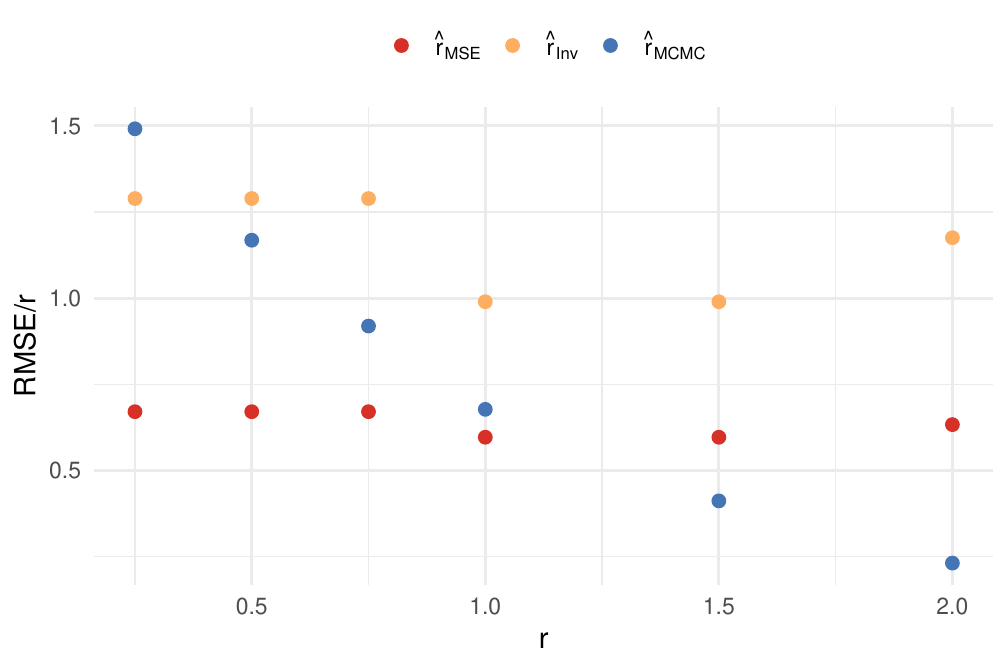}
        \caption{Comparison of the RMSE/r of three estimators when $n = 5$ and $T = 100$.}
        \label{fig:T100_n5_mse}
    \end{minipage}
    \hfill  
    \begin{minipage}{0.45\textwidth}
        \centering
        \includegraphics[width=\linewidth]{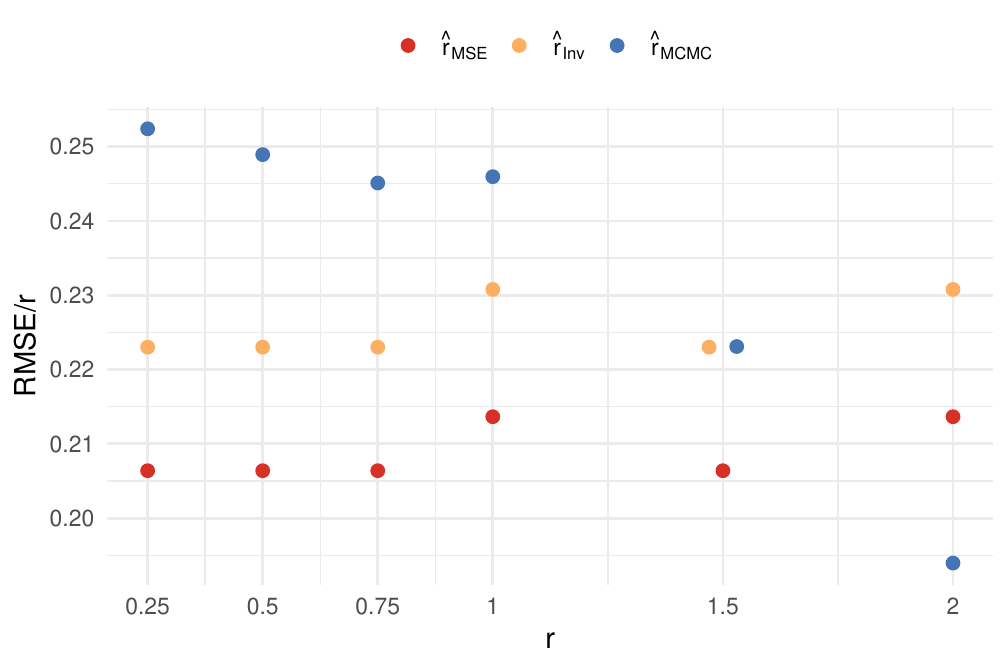}
        \caption{Comparison of the RMSE/r of three estimators when $n = 20$ and $T = 100$.}
        \label{fig:T100_n20_mse}
    \end{minipage}
\end{figure}
\begin{figure}[h!]
   \begin{minipage}{0.45\textwidth}
        \centering
        \includegraphics[width=\linewidth]{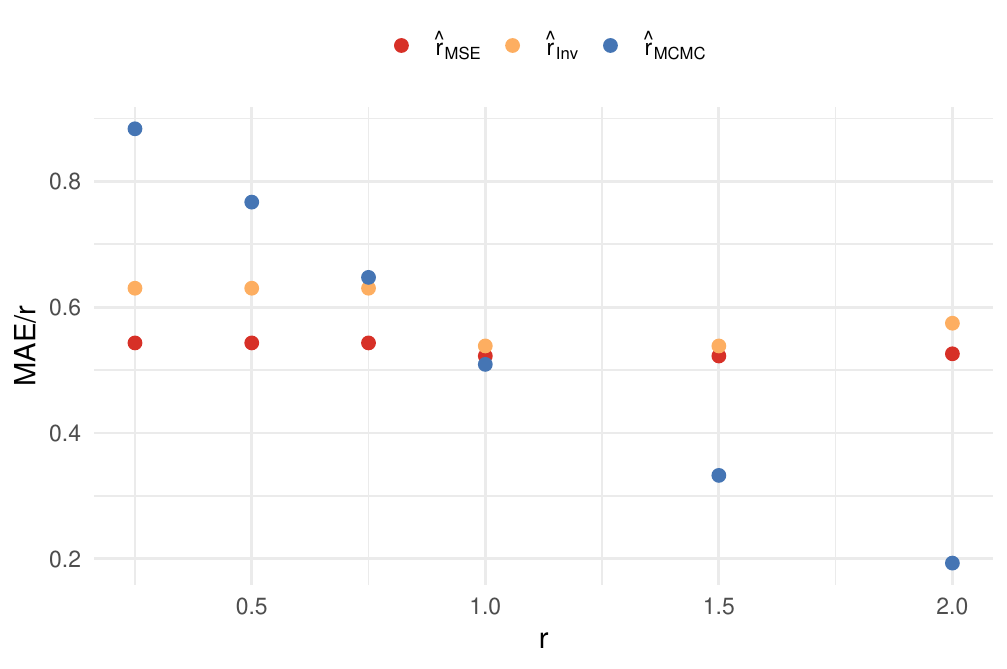}
        \caption{Comparison of the MAE/r of three estimators when $n = 5$ and $T = 100$.}
        \label{fig:T100_n_5_MAE}
    \end{minipage}
    \hfill  
    \begin{minipage}{0.45\textwidth}
        \centering
        \includegraphics[width=\linewidth]{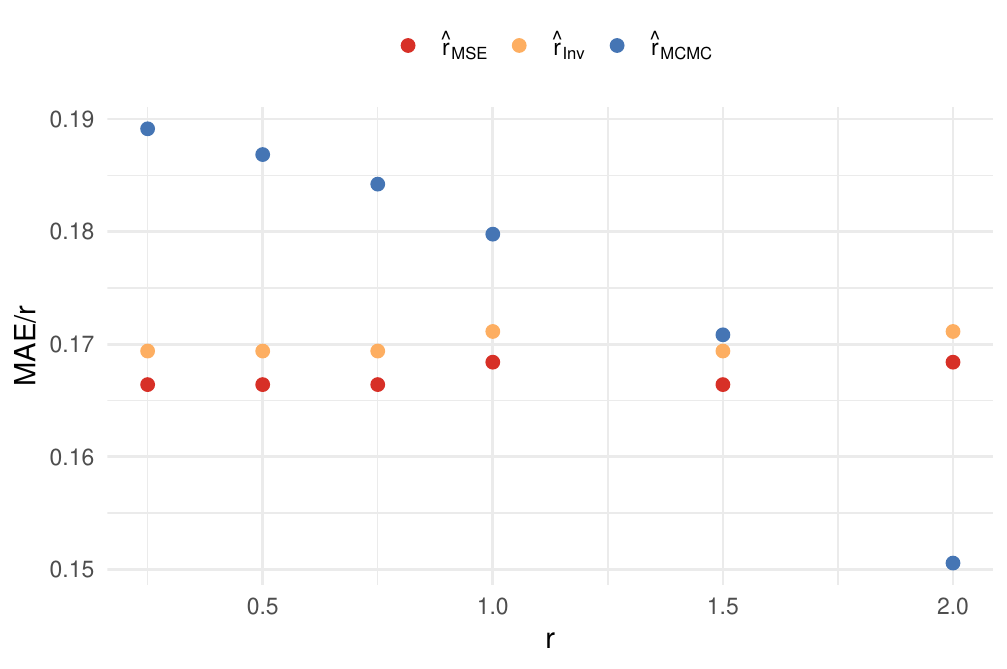}
        \caption{Comparison of the MAE/r of three estimators when $n = 20$ and $T = 100$.}
        \label{fig:T100_n_20_MAE}
    \end{minipage}
\end{figure}

We also compared different values of $T$. For example, when comparing Figures~\ref{fig:T100_n5_mse} and~\ref{fig:T100_n20_mse} with Figure~\ref{fig:mse2}, we observe relatively little difference in performance between $T = 100$ and $T = 40$. Overall, the estimators $\hat r_{MSE}$, $\hat r_{Inv}$, and $\hat r_{Bias}$ achieve good performance relative to other estimators.  They have the computational advantage that they do not require Markov Chain Monte Carlo, and they do not require the specification of a prior.  Which of the estimators is best will depend on the performance criteria.  The estimator $\hat r_{MSE}$ achieves the lowest mean squared error, while $\hat r_{Bias}$ has the advantage of being nearly unbiased.  On the other hand, both of these estimators tend to underestimate $r$ most of the time in small samples.  The estimator $\hat r_{Inv}$ appears to be a good compromise. It still achieves a lower mean absolute error than $\hat r_{MCMC}$ for small values of $r$, and the mode of the distribution is close to the true growth rate. It also has the benefit that $c_{Inv}(n)$ can be computed exactly.

\subsubsection{Performance of the confidence intervals}

We evaluate the confidence intervals as described in Section \ref{sec:CI}. First, we empirically approximate the quantiles \( q_{0.025} \) and \( q_{0.975} \).  The values of these quantiles for certain values of $n$ are listed in the Appendix.  Next, we calculate $\hat r$ using $c(n) = 1$ for 1000 simulated coalescent point processes, denoted by \( \hat{r}_i \) for \( i = 1, \dots, 1000 \), and for \( n \in \{5, 6, 7, 8, 9, 10, 15, 20\} \), $T = 40$, and $r = 1$. Using these values along with the quantiles \( q_{0.025} \) and \( q_{0.975} \), we compute the confidence intervals for each sample \( i \) as \( \textup{CI}_i := [\hat{r}_i / q_{0.025}, \hat{r}_i / q_{0.975}] \), for \( i = 1, \dots, 1000 \). To evaluate the performance of these intervals, we calculate the coverage probability, that is, the percentage of the simulations for which \( r \in \textup{CI}_i \).

The empirical coverage probabilities are shown in Figure \ref{fig:cov1}.  For all values of $n$, the coverage probability is close to 0.95.
The empirical quantiles were chosen to ensure that the confidence intervals give exactly 95$\%$ coverage in the limit as $T \rightarrow \infty$, so these simulation results demonstrate that the confidence intervals remain quite accurate for $T = 40$.

\begin{figure}[h!]
\centering
    \includegraphics[width=0.5\linewidth]{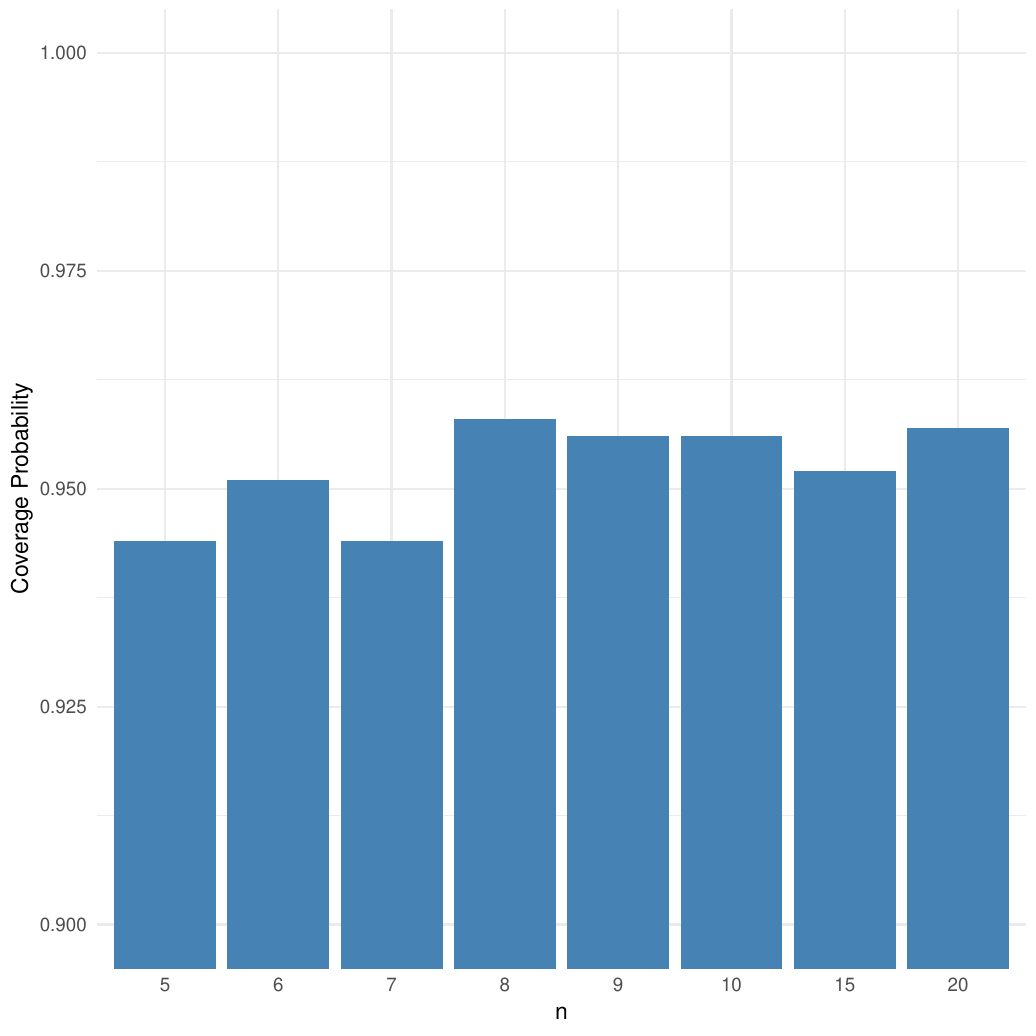}
    \caption{Coverage probability of the confidence intervals for $r = 1$ and $T = 40.$}\label{fig:cov1}
\end{figure}

\subsubsection{Influence of the Approximation  \texorpdfstring{$T = \infty$}{T = infinity} on the Quality of the Estimator}

For the calculations of \( c_{MSE}(n) \), \( c_{Bias}(n) \), and $c_{Inv}(n)$ we used the approximation \( T \to \infty \) instead of considering a fixed value of $T$.  It was necessary to make this choice because otherwise, the optimal constant would depend on the unknown value of \( r \).

Now, we quantify the error introduced by this approximation. Given $n$, $T$, and $r$, we can determine by simulation, for any value of the constant $c$, the root mean squared error and the absolute bias of the resulting estimator of $r$.  In Figures \ref{fig:influence_mse} and \ref{fig:influence_bias}, we plot the RMSE and the bias as a function of the constant.

\begin{figure}[h!]
\begin{minipage}{0.45\textwidth}
    \centering
    \includegraphics[width=\linewidth]{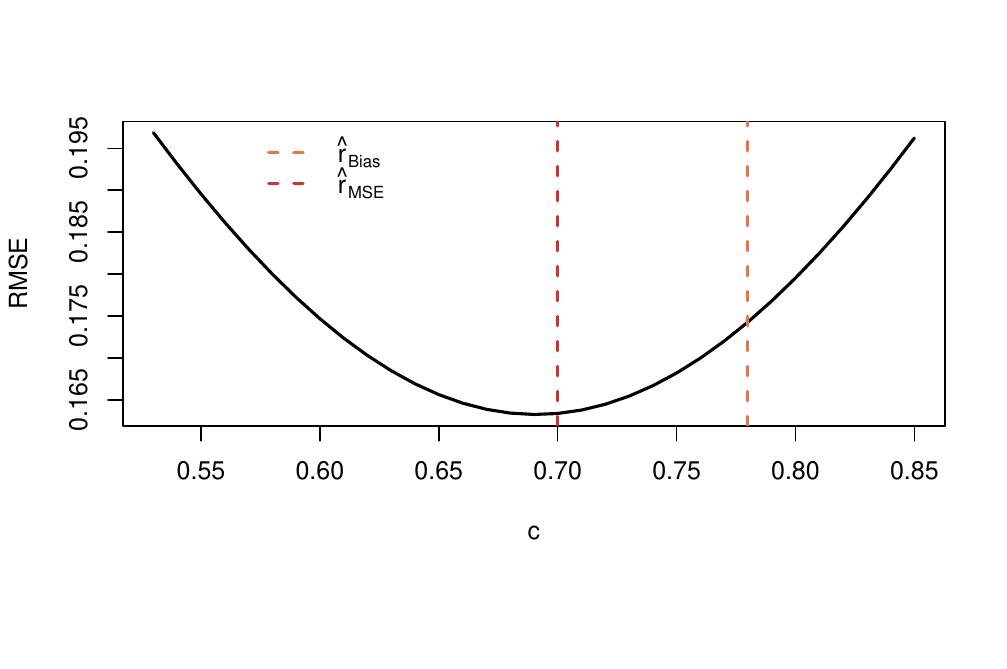}
    \caption{Influence of the constant on the RMSE for $T = 40$, $n = 10$, and $r = 0.5$.}
    \label{fig:influence_mse}
    \end{minipage}
    \hfill  
    \begin{minipage}{0.45\textwidth}
    \centering
    \includegraphics[width=\linewidth]{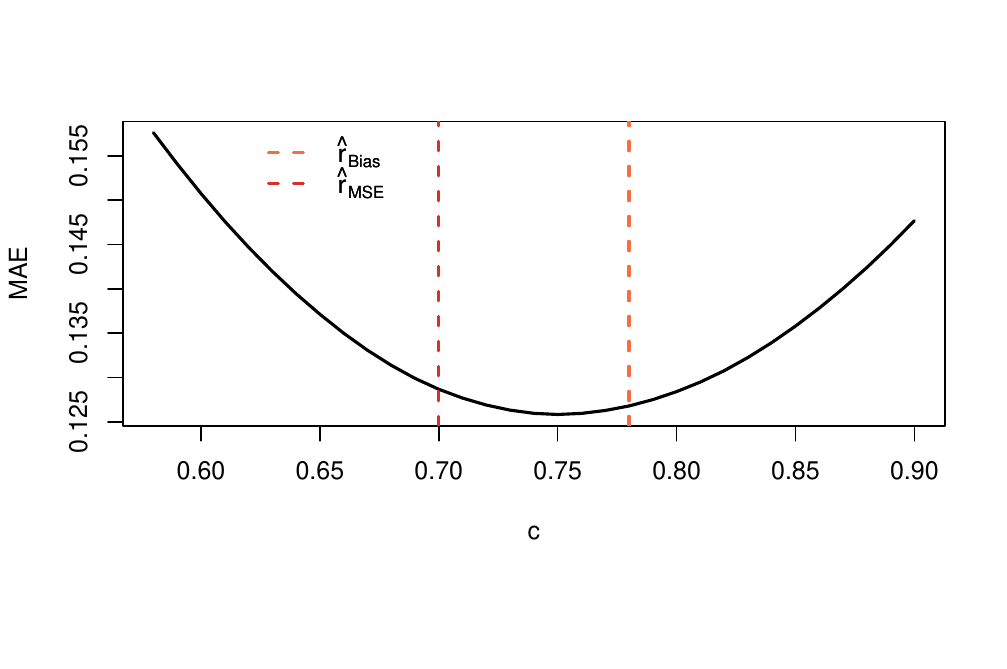}
    \caption{Influence of the constant on the MAE for $T = 40$, $n = 10$, and $r = 0.5$.}
    \label{fig:influence_bias}
    \end{minipage}
\end{figure}

The red dotted line in Figure~\ref{fig:influence_mse} lies close to the minimum point of the curve, which means that the difference between the constant that minimizes the MSE for $T=40$ and $r=1$, and the one that minimizes it for $T =\infty$, is small. Similarly, Figure \ref{fig:influence_bias} shows that the difference between the constant minimizing the absolute bias for $T = 40$ and $r = 1$ and the one that minimizes the absolute bias for $T = \infty$ is also small. Thus, both figures indicate that the error introduced by assuming $T = \infty$ when calculating the constants is negligible. 

\subsection{Theoretical properties of the estimators}

Here we show that, even in the limit as $n \rightarrow \infty$, our estimate achieves better performance than the estimate based on internal branch lengths, $\hat r_{Lengths},$ considered in \cite{Johnson2023}.  

When $n$ is large, the coalescence times are well approximated by the formula in \eqref{eq:H_i}, and $c(n)$ is close to $1$.  Therefore, we will consider the estimator $$\hat r = \frac{(n-1)(n-2)}{\sum_{i=1}^{n-1} \sum_{j=1}^{n-1} (H_i - H_j)^+}.$$  The following lemma gives the asymptotic behavior of $\hat r$.  Here we denote by $\mathcal N(\mu, \sigma^2)$ the normal distribution with mean $\mu$ and variance $\sigma^2$.

\begin{lemma}\label{rhatlem}
We have the convergence in distribution $$\sqrt{n}(\hat r - r) \xRightarrow{n \to \infty} \mathcal N \bigg(0, r^2 \bigg(4 - \frac{\pi^2}{3} \bigg) \bigg).$$
\end{lemma}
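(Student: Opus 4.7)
The plan is to rewrite $\hat r$ as a smooth function of a single $U$-statistic built from the i.i.d.\ logistic variables $U_1,\dots,U_{n-1}$ in \eqref{eq:H_i}, apply the classical CLT for non-degenerate $U$-statistics of order two, and finish with the delta method. From \eqref{eq:H_i} we have $H_i - H_j = (U_j - U_i)/r$, and the identity $(U_i-U_j)^+ + (U_j-U_i)^+ = |U_i - U_j|$ gives
\begin{equation*}
\sum_{i=1}^{n-1}\sum_{j=1}^{n-1}(H_i - H_j)^+ \;=\; \frac{1}{r}\sum_{i \neq j}(U_i - U_j)^+ \;=\; \frac{1}{r}\sum_{i<j}|U_i - U_j|.
\end{equation*}
Hence $\hat r = r/\bar V_n$, where $\bar V_n := \frac{1}{(n-1)(n-2)}\sum_{i\neq j}(U_i - U_j)^+$, and $2\bar V_n$ is exactly the degree-two $U$-statistic on $n-1$ i.i.d.\ logistic samples with symmetric kernel $h(u,v)=|u-v|$, whose mean is $\mathbb E[h(U_1,U_2)] = 2\mathbb E[(U_1-U_2)^+] = 2$.

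Since $h$ has finite second moment under the logistic distribution, the standard $U$-statistic CLT yields
\begin{equation*}
\sqrt{n}\,(\bar V_n - 1) \;\xRightarrow{n\to\infty}\; \mathcal N(0,\sigma_1^2), \qquad \sigma_1^2 \;=\; \mathrm{Var}\bigl(\mathbb E[\,|U_1-U_2|\mid U_1]\bigr).
\end{equation*}
Applying the delta method with $\phi(v)=r/v$, so that $\phi'(1) = -r$, converts this into $\sqrt n\,(\hat r - r) \xRightarrow{n \to \infty} \mathcal N(0, r^2\sigma_1^2)$. The statement of the lemma therefore reduces to verifying $\sigma_1^2 = 4 - \pi^2/3$.

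For this explicit calculation I would first compute the conditional mean in closed form. With logistic CDF $F(u) = 1/(1+e^{-u})$ and mean $0$, splitting the expectation at $u$ and integrating by parts yields $\mathbb E[\,|u - U|\,] = u + 2\log(1+e^{-u})$. Writing $u = \log(F(u)/(1-F(u)))$ and $\log(1+e^{-u}) = -\log F(u)$ collapses this into the symmetric form
\begin{equation*}
g(u) \;:=\; \mathbb E[\,|U_1 - U_2| \mid U_1 = u] - 2 \;=\; -\log F(u) - \log(1-F(u)) - 2.
\end{equation*}
Since $F(U_1) \sim \mathrm{Uniform}(0,1)$, expanding $\sigma_1^2 = \mathbb E[g(U_1)^2]$ reduces to $\int_0^1(\log x)^2\,dx = 2$, $\int_0^1 \log x\,dx = -1$, and the cross integral $I := \int_0^1 \log x \log(1-x)\,dx$. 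For $I$, expanding $\log(1-x) = -\sum_{n\ge1} x^n/n$ and using $\int_0^1 x^n\log x\,dx = -(n+1)^{-2}$ gives $I = \sum_{n\ge1}\frac{1}{n(n+1)^2}$, which after partial fractions telescopes to $2 - \pi^2/6$. Straightforward arithmetic then yields $\sigma_1^2 = 2I = 4 - \pi^2/3$.

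The main obstacle is the exact evaluation of $\sigma_1^2$; the reduction to a $U$-statistic, the CLT, and the delta method are all routine once this integral is in hand. The non-degeneracy condition $\sigma_1^2 > 0$ follows a posteriori from the computed value, so the classical CLT limit form is justified.
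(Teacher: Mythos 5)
Your proof is correct, and its skeleton --- identify $1/\hat r$ as a degree-two $U$-statistic, apply the $U$-statistic CLT, then the delta method with $\phi(v)=r/v$ --- is exactly the paper's. Where you genuinely diverge is in the variance computation, which is the substance of both arguments. The paper expands $\mathrm{Var}(1/\hat r)$ as a quadruple sum of covariances of the terms $(U_i-U_j)^+$, counts the terms of each index pattern, and plugs in four cross-moments (three quoted from Johnson et al.\ and one new one, $\mathbb E[(U_1-U_2)^+(U_1-U_3)^+]=2$), obtaining $\mathrm{Var}(1/\hat r)\sim \frac{1}{nr^2}(4-\tfrac{\pi^2}{3})$ before invoking Hoeffding. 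You instead compute the H\'ajek projection directly: the closed form $\mathbb E[\,|u-U|\,]=u+2\log(1+e^{-u})=-\log F(u)-\log(1-F(u))$, the probability integral transform, and the elementary integral $\int_0^1\log x\log(1-x)\,dx=2-\tfrac{\pi^2}{6}$ give $\sigma_1^2=\mathrm{Var}\bigl(\mathbb E[\,|U_1-U_2|\mid U_1]\bigr)=4-\tfrac{\pi^2}{3}$. The two routes are linked by the identity $\sigma_1^2=\mathrm{Cov}\bigl(|U_1-U_2|,|U_1-U_3|\bigr)$, and indeed your cross integral reproduces the paper's constant $2-\tfrac{\pi^2}{6}$. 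Two things your version buys: first, explicitly symmetrizing the kernel to $h(u,v)=|u-v|$ makes the appeal to the symmetric-kernel $U$-statistic CLT airtight, whereas the paper applies Hoeffding's theorem, stated for symmetric $f$, to the non-symmetric kernel $(V_i-V_j)^+$ and leaves the symmetrization implicit in the sum over ordered pairs; second, your route needs only one nontrivial integral rather than four cross-moments plus term counting. One small caution: the paper's display \eqref{Udensity} restricts $u$ to $(0,\infty)$, which is evidently a typo for the full real line; your computation correctly uses the standard logistic on $\mathbb{R}$, consistent with the paper's own use of the symmetry of $U_i$ and $-U_i$.
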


\begin{proof}
We will first compute the mean and variance of the reciprocal of the estimator.
Recall that $(H_i - H_j)^+ = \frac{1}{r}(U_i - U_j)^+$.
Integral calculations in \cite{Johnson2023} using \eqref{Udensity} give
\begin{align*}
\mathbb E\big[(U_1 - U_2)^+\big] &= 1 \\
\mathbb E\big[((U_1 - U_2)^+)^2\big] &= \frac{\pi^2}{3} \\
\mathbb E\big[(U_1 - U_2)^+(U_2 - U_3)^+\big] &= 2 - \frac{\pi^2}{6}.
\end{align*}
A similar integral calculation gives
\begin{align}\label{eq:ev_U2}
  \mathbb E\big[(U_1 - U_2)^+(U_1 - U_3)^+\big] = 2.
\end{align}

We have
$$\mathbb E \bigg[ \frac{1}{\hat r} \bigg] = \frac{1}{(n-1)(n-2)} \sum_{i=1}^{n-1} \sum_{j=1}^{n-1} \mathbb E \big[ (H_i - H_j)^+ \big] = \frac{1}{r}.$$
Also,
\begin{align}
\mbox{Var} \bigg( \frac{1}{\hat r} \bigg) &=\mbox{Var}\bigg(\frac{1}{(n-1)(n-2)} \sum_{i=1}^{n-1} \sum_{j=1}^{n-1} (H_i - H_j)^+ \bigg) \nonumber \\
&= \frac{1}{(n-1)^2 (n-2)^2 r^2} \sum_{i=1}^{n-1} \sum_{j=1}^{n-1} \sum_{k=1}^{n-1} \sum_{\ell = 1}^{n-1} \mbox{Cov}\Big( (U_i - U_j)^+, (U_k - U_{\ell})^+ \Big). \label{4sumcov}
\end{align}
The covariance in \eqref{4sumcov} is zero if $i = j$, if $k = \ell$, or if $i,j,k,\ell$ are all distinct.  There are $(n-1)(n-2)$ terms with $(i,j) = (k,\ell)$.  These terms have covariance $\frac{\pi^2}{3} - 1$.  There are $2(n-1)(n-2)(n-3)$ terms in which either $i = k$ or $j = \ell$ but the other two indices are different.  Using that $U_i$ and $-U_i$ have the same distribution, these terms have covariance $2 - 1 = 1$.  There are also $2(n-1)(n-2)(n-3)$ terms in which either $i = \ell$ or $j = k$ but the other two indices are different.  These terms have covariance $2 - \frac{\pi^2}{6} - 1 = 1 - \frac{\pi^2}{6}$.  It follows that
$$\mbox{Var} \bigg( \frac{1}{\hat r} \bigg) = \frac{(n-1)(n-2)(\frac{\pi^2}{3} - 1) + 2(n-1)(n-2)(n-3)(2 - \frac{\pi^2}{6})}{(n-1)^2 (n-2)^2 r^2} \sim \frac{1}{nr^2} \bigg(4 - \frac{\pi^2}{3} \bigg),$$ where $\sim$ means that the ratio of the two sides tends to one as $n \rightarrow \infty$.

The asymptotic normality of $1/\hat r$ can be deduced from the fact that $1/\hat r$ is a $U$-statistic.  Indeed, Theorem 7.1 in \cite{hoeffding1992class} states that the distribution of random variables $V$ of the form
$$V := \frac{1}{n(n-1)} \sum_{i=1}^n \sum_{\substack{j=1 \\ j \neq i}}^n f(V_i, V_j)$$
is asymptotically normal, provided that $f$ is symmetric in its arguments, that the first and second moments of $f(V_1, V_2)$ exist, and that $V_1, \dots, V_n$ are i.i.d. These conditions are fulfilled for $1/\hat r$, so we can directly deduce the asymptotic normality
\begin{align*}
   \sqrt{n}\left(\frac{1}{\hat r} - \frac{1}{r}\right) \xRightarrow{n \to \infty} \mathcal N\left(0, \frac{1}{r^2}\left(4- \frac{\pi^2}{3}\right)\right)
\end{align*}
from the variance computation above and Theorem 7.1 in \cite{hoeffding1992class}.
Applying the delta method now yields the conclusion of the lemma.
\end{proof}

It remains to show that the estimator $\hat r_{Inv}$ has the same asymptotic behavior as $\hat r$.  We will assume that we have a sample of $n$ individuals taken from a birth and death process at time $T_n$.  Here $T_n$ is finite, so the coalescence times are given by \eqref{HinT}.  We need to assume that $n \rightarrow \infty$, and that $T_n \rightarrow \infty$ sufficiently fast relative to $n$.  More precisely, we assume, as in \cite{Johnson2023}, that
\begin{equation}\label{Tnassump}
\lim_{n \rightarrow \infty} n^{3/2} (\log n) e^{-r T_n} = 0.
\end{equation}

\begin{theorem}
Suppose \eqref{Tnassump} holds.  Then
\begin{align}
    \sqrt{n}(\hat r_{Inv}- r)\xRightarrow{n \to \infty} \mathcal N\left(0, r^2\left(4- \frac{\pi^2}{3}\right)\right). \label{eq:normal}
\end{align} 
\end{theorem}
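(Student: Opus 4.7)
The plan is to reduce \eqref{eq:normal} to Lemma \ref{rhatlem} in two moves: absorb the scaling factor $c_{Inv}(n)$ using Slutsky's theorem, and then transfer the central limit theorem from the idealized estimator of the lemma to one built from the true coalescence times via a coupling. Writing $\hat r_{Inv} = c_{Inv}(n) \tilde r_n$, where
\begin{equation*}
\tilde r_n := \frac{(n-1)(n-2)}{\sum_{i \neq j}(H_{i,n,T_n} - H_{j,n,T_n})^+}
\end{equation*}
is the $c(n)=1$ version of the estimator built from the exact coalescence times, let $\hat r$ denote the idealized estimator of Lemma \ref{rhatlem}, constructed from the $H_i$ of \eqref{eq:H_i}.

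Using the exact formula \eqref{eq:inv} together with $\sum_{k=1}^{n-1} 1/k = \log n + \gamma + O(1/n)$, one obtains $c_{Inv}(n) - 1 = O(\log n / n)$, and hence $\sqrt n(c_{Inv}(n) - 1) \to 0$. Writing $\sqrt n(\hat r_{Inv} - r) = c_{Inv}(n) \sqrt n(\tilde r_n - r) + r \sqrt n(c_{Inv}(n) - 1)$, Slutsky's theorem reduces the problem to showing $\sqrt n(\tilde r_n - r) \Rightarrow \mathcal N(0, r^2(4-\pi^2/3))$. Since both $\tilde r_n$ and $\hat r$ are consistent for $r$, the delta method applied to $x \mapsto 1/x$ further reduces this to the probability-limit statement
\begin{equation*}
\sqrt n \bigg(\frac{1}{\tilde r_n} - \frac{1}{\hat r}\bigg) \xrightarrow{p} 0,
\end{equation*}
because the CLT for $1/\hat r$ is already established inside the proof of Lemma \ref{rhatlem}.

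To establish this, I would construct a joint coupling of $(H_{i,n,T_n})$ with $(H_i)$ in two stages, following the approach in \cite{Johnson2023}. First, couple $Y_{n,T_n}$ (whose density appears immediately before \eqref{HinT}) with $W \sim \mathrm{Exp}(1)$ via the shared quantile transform, matching $n Y_{n,T_n}$ with $W$. Then, conditional on $(Y_{n,T_n}, W)$, couple each $H_{i,n,T_n}$ with $H_i$ by matching the conditional density \eqref{HinT} against its logistic limit. The bound $|x^+ - y^+| \leq |x-y|$ and exchangeability yield
\begin{equation*}
\bigg|\frac{1}{\tilde r_n} - \frac{1}{\hat r}\bigg| \leq \frac{2}{n-1} \sum_{i=1}^{n-1} |H_{i,n,T_n} - H_i| \leq 2 \max_{1 \leq i \leq n-1}|H_{i,n,T_n} - H_i|,
\end{equation*}
so it suffices to verify that $\sqrt n \cdot \max_i |H_{i,n,T_n} - H_i| \to 0$ in probability under the coupling.

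The main obstacle is obtaining the required uniform pointwise rate for this coupling. The positive-part map is only Lipschitz, so there is no cancellation to exploit, and both the finite-$T_n$ correction to the density \eqref{HinT} and the finite-$n$ correction coupling $Y_{n,T_n}$ with $W/n$ must be controlled simultaneously. The small scale $Y_{n,T_n} = \Theta(1/n)$ entering the denominator of \eqref{HinT} amplifies the naive $e^{-rT_n}$ approximation error by a factor of order $n$; a union bound over the $n$ indices adds a further $\log n$; and the $\sqrt n$ scaling on the CLT side then demands $n^{3/2}(\log n) e^{-rT_n} \to 0$, which is exactly the hypothesis \eqref{Tnassump}. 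Turning this heuristic error budget into a rigorous quantile-style coupling is the technical heart of the argument.
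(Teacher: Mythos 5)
Your proposal follows essentially the same route as the paper: kill the constant via $\sqrt{n}(c_{Inv}(n)-1)\to 0$ from \eqref{eq:inv}, then reduce to a coupling of the finite-$T_n$ coalescence times with the $H_i$ of \eqref{eq:H_i} and bound the discrepancy of the positive parts by $|H_{i,n,T_n}-H_i|+|H_{j,n,T_n}-H_j|$. The one piece you defer as ``the technical heart'' is exactly what the paper does not reprove either; it imports it from Lemmas 4 and 5 of the supplement to \cite{Johnson2023}, in the form $\sqrt{n}\,\mathbb E\big[|H_1-H_{1,n,T_n}|\,{\bf 1}_{A_n}\big]\to 0$ on events $A_n$ of probability tending to one, and note that this first-moment bound already makes your average bound $\frac{2}{n-1}\sum_i|H_{i,n,T_n}-H_i|$ converge, so passing to $\sqrt{n}\max_i|H_{i,n,T_n}-H_i|$ (and the attendant union bound) is stronger than necessary.
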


\begin{proof}
By Lemma \ref{rhatlem}, it is enough to show that $\hat r_{Inv}$ and $\hat r$ can be coupled in such a way that $\sqrt{n}(\hat r_{Inv} - \hat r) \xrightarrow[]{n \to \infty}_p 0$.  We have
\begin{align*}
&\sqrt{n}(\hat r_{Inv} - \hat r) \\
&\quad = \sqrt{n} \bigg( \frac{c_{Inv}(n)(n-1)(n-2)}{\sum_{i=1}^{n-1} \sum_{j=1}^{n-1} (H_{i,n,T_n} - H_{j,n,T_n})^+} - \frac{c_{Inv}(n)(n-1)(n-2)}{\sum_{i=1}^{n-1} \sum_{j=1}^{n-1} (H_i - H_j)^+}\bigg) + \sqrt{n}(c_{Inv}(n) - 1) \hat r.
\end{align*}
We note that $\sqrt{n} (c_{Inv}(n) - 1) \xrightarrow[]{n \to \infty} 0$ according to \eqref{eq:inv}, so the second term converges in probability to zero.  It remains to show that the first term converges in probability to zero.

Using that $(H_i - H_j)^+ - (H_{i,n,T_n} - H_{j,n,T_n})^+ \leq |H_i - H_{i,n,T_n}| + |H_j - H_{j,n,T_n}|,$ the first term is bounded above by
\begin{align}\label{Hdiff}
&n^{5/2} c_{Inv}(n) \left(\frac{\left( \sum_{i=1}^{n-1} \sum_{j=1}^{n-1} (H_i - H_j)^+ \right)  - \left( \sum_{i=1}^{n-1} \sum_{j=1}^{n-1} (H_{i,n,T_n} - H_{j,n,T_n})^+ \right) }{\left(\sum_{i=1}^{n-1} \sum_{j=1}^{n-1} (H_{i,n,T_n} - H_{j,n,T_n})^+\right)\left(\sum_{i=1}^{n-1} \sum_{j=1}^{n-1} (H_i - H_j)^+ \right) }\right) \nonumber \\
&\quad \leq n^{5/2} c_{Inv}(n) \left(\frac{\sum_{i=1}^{n-1} \sum_{j=1}^{n-1} \big( |H_i - H_{i,n,T_n}| + |H_j - H_{j,n,T_n}| \big)}{\left(\sum_{i=1}^{n-1} \sum_{j=1}^{n-1} (H_{i,n,T_n} - H_{j,n,T_n})^+\right)\left(\sum_{i=1}^{n-1} \sum_{j=1}^{n-1} (H_i - H_j)^+ \right) }\right).
\end{align}
By Lemmas 4 and 5 in the supplemental material to \cite{Johnson2023}, the random variables can be coupled so that for a sequence of events $(A_n)_{n=1}^{\infty}$ with $\lim_{n \rightarrow \infty} \mathbb P(A_n) = 1$, we have $$\lim_{n \rightarrow \infty} \sqrt{n} \, \mathbb E\big[|H_1 - H_{1,n,T_n}| {\bf 1}_{A_n}\big] = 0,$$
from which it follows that
$$n^{-3/2} \sum_{i=1}^{n-1} \sum_{j=1}^{n-1} \big( |H_i - H_{i,n,T_n}| + |H_j - H_{j,n,T_n}| \big) \xrightarrow[]{n \to \infty}_p 0.$$
Because the denominator in \eqref{Hdiff} is on the scale of $n^4$ and $\lim_{n \rightarrow \infty} c_{Inv}(n) = 1$, it follows that the expression in \eqref{Hdiff} converges to zero in probability as $n \rightarrow \infty$.
\end{proof}

It was shown in \cite{Johnson2023} that $\sqrt{n}(1/\hat r_{Lengths} - 1/r) \xRightarrow{n \to \infty} \mathcal N(0, 1/r^2).$  Applying the delta method yields
\begin{align*}
\sqrt{n}(\hat r_{Lengths} - r)  &\xRightarrow{n \to \infty} \mathcal N\left(0, r^2\right).
\end{align*}
Because $4 - \frac{\pi^2}{3} \approx 0.71$, we see that the asymptotic distribution of $\hat r_{Inv}$ has a lower variance than that of $\hat r_{Lengths}$.  Because maximum likelihood estimates are asymptotically efficient, we would not expect $\hat r_{Inv}$ to have better asymptotic performance for large $n$ than $\hat r_{MLE}$.

\subsection{Application to Real Data}

We apply $\hat r_{MSE}$, $\hat r_{Inv}$, and $\hat r_{MCMC}$ to data from \cite{williams2022life, van2021reconstructing, fabre2022longitudinal, mitchell2022clonal} as saved, e.g., in \texttt{realCloneData[["cloneTrees"]]} of the package \texttt{cloneRate.}
These were the same 42 data sets analyzed in \cite{Johnson2023}.
For each of these data sets, the sample size $n$ was between $10$ and $109$. The estimates of the growth rate are shown in Figure \ref{fig:real}.

Usually, it holds that $\hat r_{MSE} \leq \hat r_{Inv} \leq \hat r _{MCMC}$ for the considered data. For most of the data sets, the three estimators give very similar results, although there are a few exceptions where there are larger differences among the methods.
As would be expected from the simulation study,
the differences among the estimators decreases for increasing sample size $n.$ Overall, we see that, much like the methods proposed by \cite{Johnson2023}, the methods $\hat r_{Inv}$ and $\hat r_{MSE}$ give results on these real data sets that are comparable to those obtained from Markov chain Monte Carlo methods, for less computational cost.

\begin{figure}[h!]
    \centering
    \includegraphics[width=0.95\linewidth]{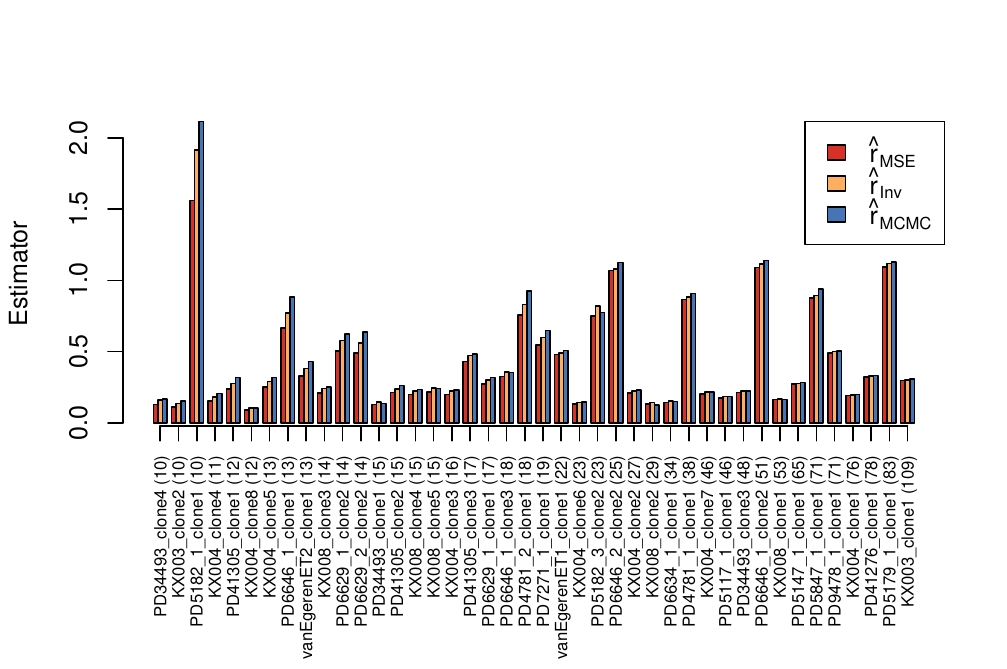}
    \caption{Application to data from \cite{williams2022life, van2021reconstructing, fabre2022longitudinal, mitchell2022clonal}. On the $x$-axis are the names of the clone and the sample size and on the $y$-axis are the estimators. }\label{fig:real}
\end{figure}

\section{Discussion}

We provide an estimator for the growth rate of a birth and death processes which outperforms the state-of-the-art methods (\cite{williams2022life, mitchell2022clonal, Johnson2023}) with respect to the mean squared error and the bias, especially for small sample sizes. We also show analytically that for large sample sizes, our estimator outperforms an estimate based on internal branch lengths from \cite{Johnson2023}.

Our estimator involves a constant which can be chosen in multiple ways.  We propose three choices for the constant, called $c_{MSE}(n)$, $c_{Bias}(n)$, and $c_{Inv}(n)$.  The estimate with $c_{MSE}(n)$ will minimize the mean squared error but will underestimate the growth rate most of the time in small samples.  The estimate using $c_{Bias}(n)$ is nearly unbiased but has a slightly higher mean squared error, and its density also has a mode that is substantially below the true growth rate when $n$ is small.  The estimate using $c_{Inv}(n)$ has a slightly higher mean squared error than the other two estimates, but its mode is close to the true growth rate, and the value of the constant can be computed analytically. 

Our method only applies to exponentially growing populations. However, tumors do not always follow an exponential growth curve (\cite{zahid2021forecasting}).  Nevertheless, as noted in \cite{Johnson2023}, as long as the population grows exponentially for a short time, our methods should provide a good estimate of that initial exponential growth rate because the coalescence times of the sampled lineages are usually early in the history of the population.  Mathematical results in this direction were established by \cite{cheek2022coalescent}.  To test the assumption of an exponentially growing population, one could use skyline methods as introduced by \cite{pybus2000integrated}, \cite{drummond2005}, and \cite{stadlerskyline}. Skyline methods can reconstruct the population size history under different demographic models.

Our methods also rely on the assumption that all individuals have the same birth and death rates.  This is a limitation because in cell populations, sometimes several driver mutations accumulate, leading to fitness differences among the cells (\cite{scott2017somatic}).  However, it was noted by \cite{Johnson2023} that site frequency spectrum data from the blood cancer subclones studied in \cite{Johnson2023} agrees well with theoretical predictions for an exponentially growing population in which all cells have the same fitness.  Finally, our methods do not take into account any spatial structure in the population and therefore may not be directly applicable to solid tumors.

\bigskip
\bigskip
\noindent {\bf {\Large{Data and Code Availability}}}
\bigskip

\noindent The code is available on our \href{https://github.com/CarolaHeinzel/Estimation_NetGrowthRate}{GitHub website} (\url{https://github.com/CarolaHeinzel/Estimation_NetGrowthRate}). This code can be used to reproduce the results, i.e. to compare the different estimators for the growth rate and to apply our method to new data. The code is based on the R-package \texttt{cloneRate}, where the authors also include the real data that has been used in our study.

\bigskip
\bigskip
\noindent {\bf {\Large{Funding}}}
\bigskip

\noindent CSH is funded by the Deutsche Forschungsgemeinschaft (DFG, German Research Foundation) – Project-ID 499552394 – SFB Small Data. CSH received travel support from the Wissenschaftliche Gesellschaft in Freiburg im Breisgau.

\bigskip
\bigskip
\noindent {\bf {\Large{Acknowledgments}}}
\bigskip

\noindent CSH thanks Peter Pfaffelhuber for his continuous and strong support.  Both authors thank Kit Curtius and Brian Johnson for helpful comments on an earlier draft of the paper, which led to an improved presentation. Both authors thank two referees for comments that improved the presentation of the results.

\bigskip
\bigskip
\noindent {\bf {\Large{Declaration of Conflicts of Interest}}}
\bigskip

\noindent The authors declare that there are no conflicts of interest.

\printbibliography

\section{Appendix}\label{sec:appendix}

\subsection{Derivation of equation \eqref{eq:ev}}\label{app:der}

In this section, we calculate $\mathbb E\big[L_n^{in}|H_n^{(1)}, ..., H_n^{(n-1)}\big].$  Recall from \eqref{eq:IBL} that
$$E\big[L_n^{in}|H_n^{(1)}, ..., H_n^{(n-1)}\big] = \mathbb E\bigg[\bigg( \max_{1 \leq i \leq n-1} H_{i,n} - H_{1,n} \bigg) + \sum_{i=1}^{n-2} (H_{i,n} - H_{i+1,n})^+ \Big| H_n^{(1)}, ..., H_n^{(n-1)}\bigg].$$
Recall that $H_{i,n} = T - \frac{1}{r}(\log(Q_n) + U_{i,n})$, and the random variables $U_{1,n}, \dots, U_{n-1,n}$ are i.i.d.  The random variables 
$H_{1,n}, \dots, H_{n-1, n}$ are not i.i.d. because they all depend on $Q_n$, but these random variables are exchangeable.  Therefore, it holds that $\mathbb P(H_{i,n} = H^{(j)}_n) = \frac{1}{n-1}$ for all $i, j \in \{1,..., n-1\}$.  It follows that
$$\mathbb E\big[H_{i,n}|H_n^{(1)}, ..., H_n^{(n-1)}\big] = \frac{1}{n-1} \sum_{i=1}^{n-1} H_n^{(i)} = \frac{1}{n-1} \sum_{i=1}^{n-1} H_{i,n}.$$ We also calculate
\begin{align*}
    \mathbb E\bigg[\sum_{i=1}^{n-2} (H_{i,n}-H_{i+1,n})^+|H_n^{(1)}, \dots, H_n^{(n-1)}\bigg]
    &= \sum_{i=1}^{n-2} \frac{1}{(n-1)(n-2)} \sum_{v=1}^{n-1} \sum_{u=1}^{n-1} (H_n^{(u)}-H_n^{(v)})^+ \\
    &= \frac{1}{n-1} \sum_{v=1}^{n-1} \sum_{u=1}^{n-1} (H_n^{(u)}-H_n^{(v)})^+ \\
     &= \frac{1}{n-1} \sum_{i=1}^n \sum_{j=1}^n (H_{i,n} - H_{j,n})^+.
\end{align*}
Combining these results, we get
$$\mathbb E\big[L_n^{in}|H_n^{(1)}, \dots, H_n^{(n-1)}\big] = H_n^{(n-1)} - \frac{1}{n-1}\sum_{i=1}^{n-1} H_{i,n} + \frac{1}{n-1}\sum_{i = 1}^{n-1} \sum_{j = 1}^{n-1}(H_{i,n} - H_{j,n})^+,$$
which matches \eqref{eq:ev}.

 \subsection{Calculation of  \texorpdfstring{$c_{Inv}(n)$}{c(n)}}\label{sec:cInv}

We show here how to obtain the value of $c_{Inv}(n)$.  We have
$$c_{Inv}(n) = \mathbb E \bigg[ \frac{1}{S_n} \bigg] = \mathbb E[(U_{1,n} - U_{2,n})^+].$$
Therefore,
\begin{align*}
 c_{Inv}(n) &=  \int_{-\infty}^\infty \mathbb P(U_{i,n} \leq x \leq U_{j,n}) \: dx \\
  &=  \int_{0}^\infty \int_{-\infty}^\infty \mathbb P(U_{i,n} \leq x \leq U_{j,n} |Q_{n} = q) f_{Q_{n}(q)}\: dx \: dq \\
   &=  \int_{0}^\infty \int_{-\infty}^\infty  \mathbb P(U_{i,n} \leq x|Q_{n} = q) \mathbb P(x \leq U_{j,n} |Q_{n} = q) f_{Q_{n}(q)} \: dx \: dq
\end{align*}
Recalling \eqref{Undensity}, we have, for $x \geq -\log q$,
$$\mathbb P(U_{j,n} \geq x|Q_n = q) = \int_x^{\infty} \frac{1+q}{q} \cdot \frac{e^u}{(1+e^u)^2} \: du = \frac{1+q}{q} \cdot \frac{1}{1+e^x}.$$
Therefore,
$$c_{Inv}(n) = \int_{0}^\infty \int_{-\log q}^\infty \left(1- \frac{1+q}{q}\frac{1}{1+e^x}\right) \bigg( \frac{1+q}{q} \frac{1}{1+e^x} \bigg) \frac{nq^{n-1}}{(1+q)^{n+1}} \: dx \: dq.$$
Now making the substitution $y = 1/(1+e^x)$, so that $dy/dx = -y(1-y)$, we get
$$c_{Inv}(n) = \int_{0}^\infty \bigg( \int_{0}^{q/(1+q)} \left(1- \frac{1+q}{q}y\right) \frac{1}{1-y} \: dy \bigg) \frac{nq^{n-2}}{(1+q)^{n}} \: dq.$$
To evaluate the inner integral, we write $1/(1-y)$ as an infinite series to get
\begin{align*}
\int_{0}^{q/(1+q)} \left(1- \frac{1+q}{q}y\right) \frac{1}{1-y} \: dy &= \int_{0}^{q/(1+q)} \bigg(\sum_{j=0}^{\infty} y^j - \frac{1+q}{q} \sum_{j=0}^{\infty} y^{j+1} \bigg) \: dy \\
&= \sum_{j=0}^{\infty} \bigg( \frac{q}{1+q} \bigg)^{j+1} \frac{1}{j+1} - \sum_{j=0}^{\infty} \bigg( \frac{q}{1+q} \bigg)^{j+1} \frac{1}{j+2} \\
&= \sum_{j=0}^{\infty} \bigg(\frac{1}{j+1} - \frac{1}{j+2}\bigg) \bigg( \frac{q}{1+q} \bigg)^{j+1}.
\end{align*}
It follows that
\begin{align*}
c_{Inv}(n) &= \sum_{j=0}^\infty \left(\frac{1}{j+1} - \frac{1}{j+2}\right) \int_0^{\infty} \frac{nq^{n+j-1}}{(1+q)^{n+j+1}} \: dq \\
&= \sum_{j=0}^\infty \left(\frac{1}{j+1} - \frac{1}{j+2}\right) \frac{n}{n+j} \\
&= \frac{n}{n-1} \sum_{j=0}^{\infty} \bigg( \frac{1}{1+j} - \frac{1}{n+j} \bigg) - \frac{n}{n-2} \sum_{j=0}^{\infty} \bigg( \frac{1}{2+j} - \frac{1}{n+j} \bigg) \\
&= \frac{n}{n-1} \sum_{k=1}^{n-1} \frac{1}{k} - \frac{n}{n-2} \sum_{k=2}^{n-1} \frac{1}{k} \\
&= \frac{n}{n-2} \bigg(1 - \frac{1}{n-1} \sum_{k=1}^{n-1} \frac{1}{k} \bigg).
\end{align*}

\subsection{Extended Information about the real data}\label{sec:EI}

We show the distribution of $T$ and $n$ in the data from \cite{williams2022life}, \cite{van2021reconstructing, fabre2022longitudinal, mitchell2022clonal} as available in the R package \texttt{cloneRate}.

\begin{figure}[h!]
\begin{minipage}{0.45\textwidth}
    \centering
    \includegraphics[width=\linewidth]{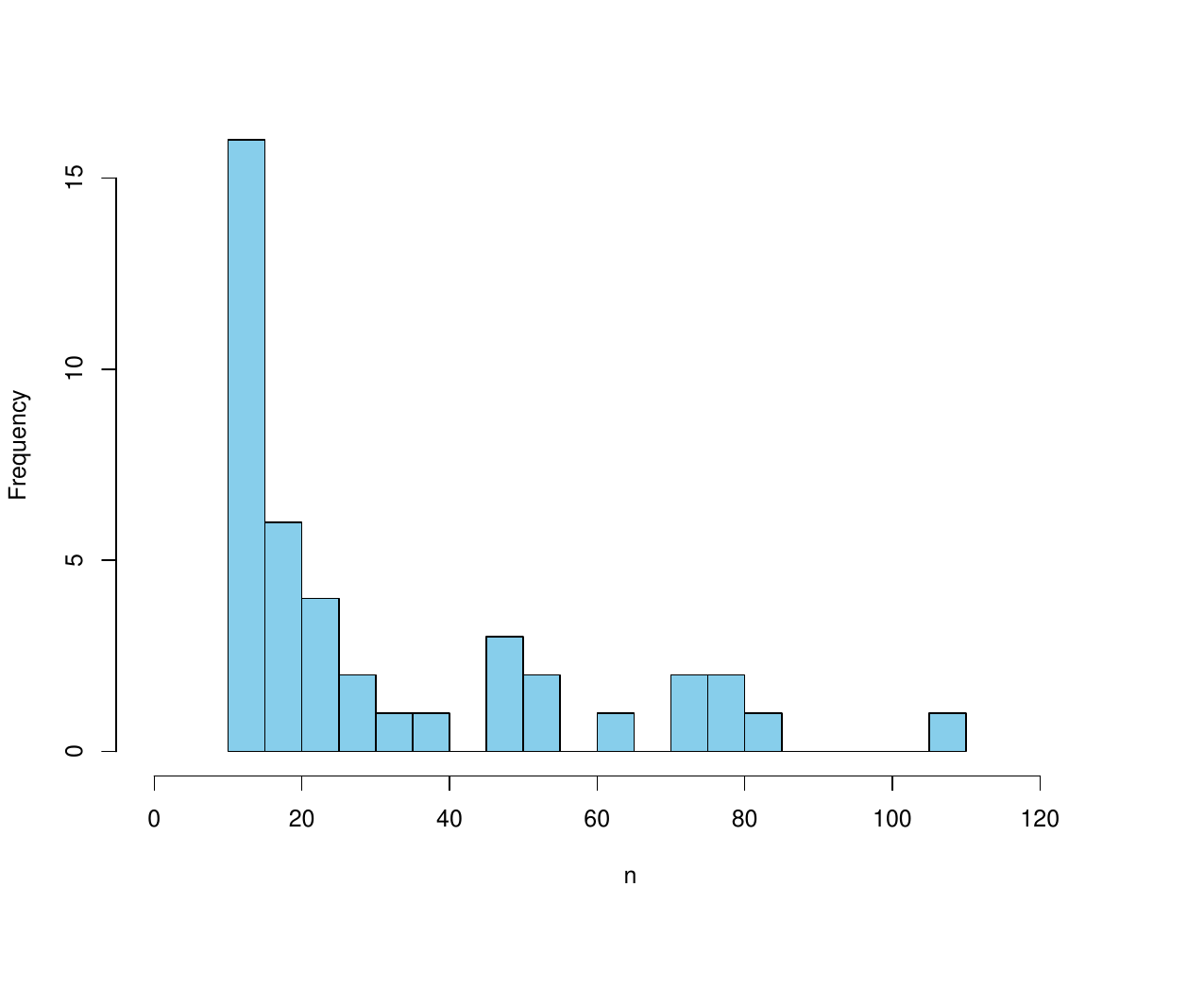}
    \caption{Distribution of $n$ for the real data from \cite{williams2022life}, \cite{van2021reconstructing, fabre2022longitudinal, mitchell2022clonal}.}
    \end{minipage}
    \hfill  
    \begin{minipage}{0.53\textwidth}
    \centering
    \includegraphics[width=\linewidth]{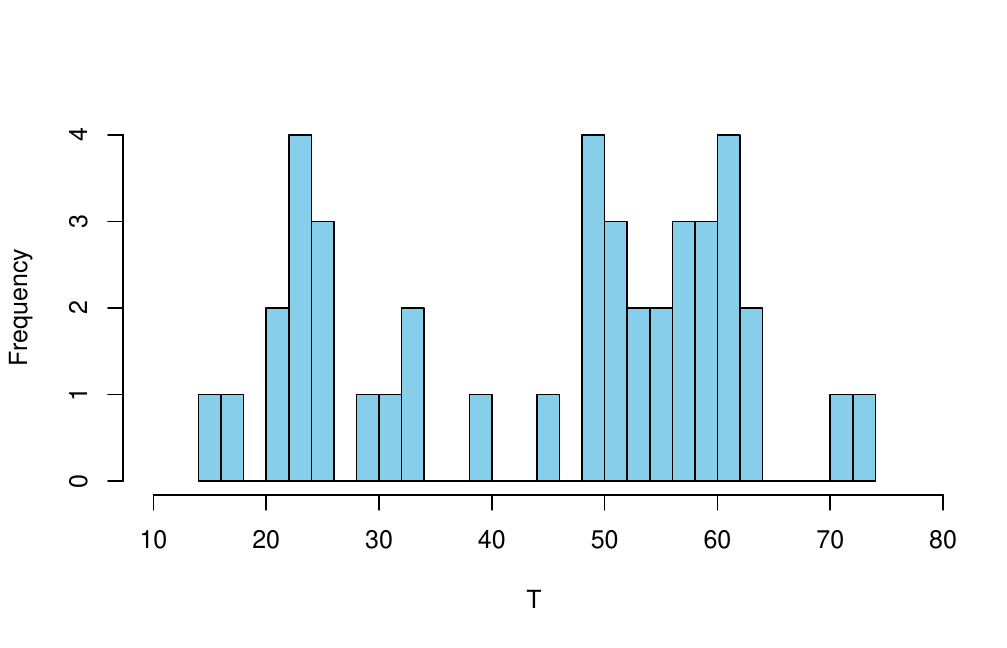}
    \caption{Distribution of $T$ for the real data from \cite{williams2022life}, \cite{van2021reconstructing, fabre2022longitudinal, mitchell2022clonal}.}
    \end{minipage}
\end{figure}

\subsection{Values of the constants}

Table \ref{tab:c} displays the values of the constants $c_{Inv}(n)$$, c_{MSE}(n)$, and $c_{Bias}(n)$ to two decimal places for certain values of $n$.  The table also gives the reciprocals of the 0.025 and 0.975 quantiles of the distribution of $S_n$, which are used in calculating the confidence intervals. The code to calculate these quantiles can be found on our \href{https://github.com/CarolaHeinzel/Estimation_NetGrowthRate}{GitHub website}.

\begin{table}[h!]
    \centering
    \begin{tabular}{c|c|c|c|c|c}
       $n$  & $c_{Inv}(n)$ &$c_{MSE}(n)$ & $c_{Bias}(n)$ & $1/q_{0.025}$ & $1/q_{0.975}$ \\
         \hline
       5  & 0.80 & 0.355 & 0.59 & 1.73 & 0.21 \\
6  & 0.82 & 0.49  & 0.66 & 1.62 & 0.27 \\
7  & 0.83 & 0.58  & 0.71 & 1.53 & 0.32 \\
8  & 0.84 & 0.63  & 0.74 & 1.49 & 0.36 \\
9  & 0.85 & 0.67  & 0.76 & 1.46 & 0.40 \\
10 & 0.86 & 0.70  & 0.78 & 1.43 & 0.44 \\ 
11 & 0.86 & 0.72  & 0.80  & 1.40 & 0.45\\
12 & 0.87 & 0.74  & 0.81  & 1.38 & 0.47\\
13 & 0.88 & 0.76  & 0.82  & 1.36 & 0.49\\
14 & 0.88 & 0.77  & 0.83  & 1.35 & 0.51 \\
15 & 0.89 & 0.79  & 0.84 & 1.33 & 0.52 \\
16 & 0.89 & 0.80  & 0.84  & 1.32 & 0.53 \\
17 & 0.89 & 0.81  & 0.85  & 1.31 & 0.55\\
18 & 0.90 & 0.82  & 0.86  & 1.30 & 0.56 \\
19 & 0.90 & 0.82  & 0.86 & 1.29 & 0.57 \\
20 & 0.90 & 0.83  & 0.87  & 1.28 & 0.58 \\
30 & 0.93& 0.87 & 0.92&   1.22& 0.65 \\
40 & 0.94 & 0.90 & 0.93 & 1.20 & 0.71 \\
50 & 0.95 & 0.92  & 0.93& 1.18  & 0.74  \\
60 & 0.95 & 0.93&  0.95 & 1.16 & 0.76\\
70 & 0.96 & 0.94& 0.96& 1.15 &  0.78\\
80 & 0.96 & 0.94 & 0.96 & 1.14 &  0.79\\
90 & 0.96 & 0.95 & 0.96 & 1.13 &  0.80\\
100 & 0.97 & 0.95 & 0.96 & 1.13 &  0.81 \\
    \end{tabular}
    \caption{Values of the constants.}
    \label{tab:c}
\end{table}

\subsection{Influence of the maximal growth rate on the estimator}\label{sec:inf}

In Figure \ref{fig:mgr}, we present the RMSE for $n=5, 10, 20$ and $r = 1$ with different values of the maximal growth rate ($4, 5, 10$), using the R-function \texttt{birthDeathMCMC} of the R package \texttt{cloneRate}. There, we see that performance improves, especially for small sample sizes, drastically when we reduce the maximal growth rate from 10 to 5. We performed 100 simulation runs for each case.

\begin{figure}[h!]
    \centering
    \includegraphics[width=0.5\linewidth]{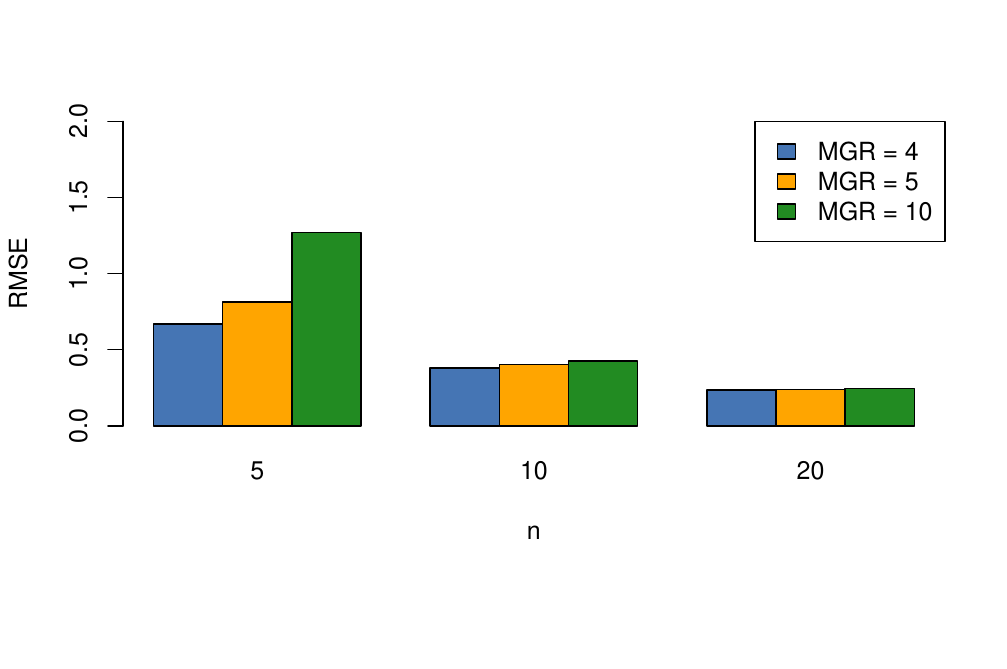}
    \caption{Influence of the value for \texttt{maxGrowthRate} in the function \texttt{birthDeathMCMC} in the R-package \texttt{cloneRate}. Here, MGR stands for maximal growth rate.}
    \label{fig:mgr}
\end{figure}

\end{document}